\newcommand{\half}{\frac{1}{2}}
\newcommand{\q}{q}
\newcommand{\db}{\Omega}
\newcommand{\K}{K}
\newcommand{\V}{\tilde V}
\newcommand{\e}{\beta}
\newcommand{\G}{{ \cal G}}
\newtheorem{Theorem}{Theorem}
\newtheorem{Corollary}{Corollary}
\newtheorem{Proposition}{Proposition}
\date{May 27, 2015}
\numberwithin{equation}{section}
\begin{document}

\begin{title}
{\bf Geometrization Conditions for Perfect Fluids, Scalar Fields, and Electromagnetic Fields}
\end{title}

\author{
D.~S.~Krongos
and
C.~G.~Torre
\\
\it Department of Physics, 
Utah State University\\ 
\it Logan, UT, USA, 84322-4415
}

\maketitle

\begin{abstract}
Rainich-type conditions giving a spacetime ``geometrization'' of matter fields in general relativity are reviewed and extended.   Three types of matter  are considered: perfect fluids,  scalar fields, and electromagnetic fields.  Necessary and sufficient conditions on a spacetime metric for it to be part of a perfect fluid solution of the Einstein equations are given.  Formulas for constructing the fluid from the metric are  obtained. All fluid results hold for any spacetime dimension.  Geometric conditions on a metric which are necessary and sufficient for it to define a solution of the Einstein-scalar field equations and formulas for constructing the scalar field from the metric are unified and extended to arbitrary dimensions, to include a cosmological constant, and to include any self-interaction potential.  Necessary and sufficient conditions on a four-dimensional spacetime metric for it to be an electrovacuum and formulas for constructing the electromagnetic field from the metric are generalized to include a cosmological constant.  Both  null and non-null electromagnetic fields are treated.   A number of  examples and applications of these results are presented.

\end{abstract}
\bigskip\bigskip

\section{Introduction}

In the general theory of relativity it is often possible to eliminate the matter fields from the Einstein-matter field equations and express the equations as local geometric conditions on the spacetime metric alone.  This possibility was discovered by Rainich \cite{Rainich}, who showed how to eliminate  the Maxwell field from the Einstein-Maxwell equations, arriving at the ``Rainich conditions", which  give necessary and sufficient  conditions on a spacetime metric for it  to be a non-null electrovacuum.   Rainich's work was made prominent by Misner and Wheeler \cite{MW}, who advanced the  ``geometrization'' program in which all matter was to be modeled as a manifestation of spacetime geometry.   Over the subsequent years a variety of additional geometrization results have been found pertaining to electromagnetic fields, scalar fields, spinor fields, fluids, and so forth.  See, {\it e.g.}, references \cite{Peres}--\cite{Torre2014}.    Results such as these provide, at least in principle, a new way to analyze field equations and their solutions from a purely geometric point of view, just involving the metric.  

The geometrization conditions which have been obtained over the years, while conceptually  elegant, are generally  more complicated than the original Einstein-matter field equations.  For example, the Einstein-Maxwell equations are a system of variational second-order PDEs, while the Rainich conditions involve  a system of non-variational fourth-order PDEs.  For this reason one  can understand why geometrization results have seen relatively little practical use in relativity and field theory. The current abilities of symbolic computational systems  have, however, made the use of geometrization conditions viable for various applications.  Indeed, the bulk of the non-null electrovacuum solutions presented in the treatise of reference \cite{Stephani} were verified using a symbolic computational implementation of the classical Rainich conditions.  The geometrization examples for various null electrovacua found in \cite{Torre2014} were obtained using the {\sl DifferentialGeometry} package in {\it Maple} \cite{AT2012}.  The purpose of this paper is to compile a set of geometrization results for the Einstein field equations which involve the most commonly used matter fields and  which are as comprehensive and as general as possible while at the same time in a form suitable for symbolic computational applications.  

This last point requires some elaboration as it significantly constrains the type of geometric conditions which we shall deem suitable for our purposes.  
A suitable geometrization condition for our purposes will define an algorithm which takes as input a given spacetime metric and which determines, solely through algebraic and differentiation operations on the metric, whether the metric defines a solution of the Einstein equations with a given matter content.  When the metric does define a solution, the matter fields shall be constructed  directly from the metric via  algebraic operations, differentiation, and integration.
 
  We summarize our treatment for each of  three types of matter fields and compare to existing results as follows.

\medskip\noindent
{\sl Perfect Fluids}

We  give necessary and sufficient conditions on a spacetime metric for it to be part of a perfect fluid solution of the Einstein equations.  Formulas for constructing the fluid from the metric are  obtained.  These results apply to spacetimes of any dimension greater than two and allow for a cosmological constant.  No energy conditions or equations of  state are imposed.    Existing geometrization conditions for fluids can be found in \cite{Senovilla}, which generalize those found in four spacetime dimension in \cite{CollFerrando}.   The conditions given in \cite{Senovilla} and \cite{CollFerrando}, while elegant, involve the existence of certain unspecified functions and so do not  satisfy the computational criteria listed above.  The results of \cite{CollFerrando} also include conditions which enforce equations of state, while the results of \cite{Senovilla} enforce the dominant energy condition. Our conditions enforce neither of these since we are interested in geometrization conditions which characterize any type of fluid solutions. The geometrization conditions we obtain are built algebraically from the Einstein tensor and so involve up to two derivatives of the metric.

\medskip\noindent
{\sl Scalar Fields}

 Geometric conditions on a metric which are necessary and sufficient for it to define a solution of the Einstein-scalar field equations and formulas for constructing the scalar field from the metric have been obtained by Kucha\v r for free fields in four spacetime dimensions without a cosmological constant \cite{Kuchar}. These results apply to massless and massive fields.  The results of \cite{Kuchar} subsume related results in  \cite{Peres} and \cite{Penney1965}.  More recently,  conditions for a symmetric tensor to be algebraically that of a free massless scalar field in any dimension, without cosmological constant, have been given in \cite{Senovilla}. These conditions are necessary algebraically but are not sufficient for geometrization since additional differential conditions are required.  Here we give necessary and sufficient conditions for a metric to define a solution of the Einstein-scalar field equations which generalize all these results.  In particular, the results we obtain here  hold in arbitrary dimensions, they allow for a cosmological constant,  they allow for a mass, and they allow for a freely specifiable self-interaction potential.  Null and non-null fields are treated. The geometrization conditions we have found for a scalar field necessarily involve both algebraic and differential conditions on the Einstein tensor; they involve up to three derivatives of the spacetime metric. 
 
 \medskip\noindent
{\sl Electromagnetic Fields} 

Necessary and sufficient conditions on a four-dimensional spacetime metric for it to be a non-null electrovacuum were given by Rainich \cite{Rainich} and Misner, Wheeler \cite{MW}.  The null case has been investigated by Misner, Wheeler \cite{MW}, Peres \cite{Peres1961}, Geroch \cite{Geroch},  Bartrum \cite{Bartrum},  and Ludwig \cite{Ludwig}.  Building upon Ludwig's results, a set of geometrization conditions for null electrovacua  has been given by one of us in reference \cite{Torre2014} via the Newman-Penrose formalism.  Here we generalize both the non-null results of Rainich, Misner and Wheeler and the null results of reference \cite{Torre2014}  to include a cosmological constant.  While this represents a very modest generalization of the classical Rainich conditions in the non-null case, it represents a less obvious generalization in the null case. In any case, we hope there is some value in assembling all these results in one place and in a unified form which is amenable to computational algorithms. All these results hold in four dimensions. The geometrization conditions for non-null electromagnetic fields involve up to four derivatives of the metric, while the geometrization of null electromagnetic fields involves as many as five derivatives of the metric. 

\medskip
Besides proving various geometrization theorems, we provide a number of modest illustrations of the theorems which hopefully serve to clarify their structure and usage.  All these illustrations were accomplished using the {\sl DifferentialGeometry} package in {\it Maple}, amply demonstrating the amenability of our results to symbolic computation. Software implementation of our geometrization results, computational details of some of the applications presented here, along with additional applications, can be found at {\tt http://digitalcommons.usu.edu/dg/}.

\section{Perfect Fluids}

Let $(M, g)$ be an $n$-dimensional spacetime, $n > 2$, with signature $(-++ \dots +)$.  Let $\mu\colon M\to {\bf R}$ and $p\colon M\to {\bf R}$ be functions on $M$.  Let $u$ be a  unit timelike vector field on $M$, that is, $g_{ab}u^a u^b = - 1$. The Einstein equations for a perfect fluid are 
\begin{equation}
R_{ab} - \frac{1}{2} R g_{ab} + \Lambda g_{ab} = q\Big\{(\mu + p) u_a u_b + p g_{ab}\Big\}.
\end{equation}
Here $R_{ab}$ is the Ricci tensor of $g_{ab}$, $R = g^{ab}R_{ab}$ is the Ricci scalar,  $\Lambda$ is the cosmological constant, and $q = {8\pi \G}/{c^4}$ with $\G$ being Newton's constant.  We note that the cosmological and Newton constants can be absorbed into the definition of the fluid.  With
\begin{equation}
\tilde\mu = q\mu + \Lambda,\quad \tilde p = qp - \Lambda,
\end{equation}
the Einstein equations take the form
\begin{equation}
R_{ab} - \frac{1}{2} R g_{ab}  =  (\tilde\mu +\tilde p) u_a u_b + \tilde p g_{ab}.
\label{EEPF}
\end{equation}

If there exist functions $\tilde \mu$ and $\tilde p$ and a timelike unit vector field $u$ on $M$ such that (\ref{EEPF}) holds, we say that $(M, g)$ is a perfect fluid spacetime.  Note that if $\tilde \mu + \tilde p = 0$ in some open set ${\cal U}\subset M$ then the spacetime is actually an Einstein space on $\cal U$. In what follows, when we speak of perfect fluid spacetimes we assume that $\tilde \mu + \tilde p \neq 0$ at each point of the spacetime. Note also that we have not imposed any energy conditions, equations of state, or thermodynamic properties.  These additional considerations are examined, for example, in reference \cite{CollFerrando}.

In the following we will use the  trace-free Ricci (or trace-free Einstein) tensor $S_{ab}$:
\begin{equation}
S_{ab} = R_{ab} - \frac{1}{n} R g_{ab} = G_{ab} - \frac{1}{n} G g_{ab}.
\end{equation}
We will also need the following elementary result.
\begin{Proposition}
Let $Q_{ab}$ be a covariant, symmetric, rank-2 tensor on an $n$-dimensional vector space $V$. $Q_{ab}$ satisfies 
\begin{equation}
	 Q_{a[b} Q_{c]d}  = 0
\label{Qeq}
\end{equation}
if and only if there exists a covector $v_a\in V^*$ such that 
\begin{equation} 
	 Q_{ab} =\pm v_a v_b.
\label{vveq}
\end{equation}
\label{prodvec}
\end{Proposition}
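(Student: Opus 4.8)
The plan is to handle the two implications separately, the reverse one being immediate and all the content lying in the forward one. For the reverse implication: if $Q_{ab}=\pm v_av_b$ for some $v_a\in V^*$, then in $Q_{a[b}Q_{c]d}$ the outer indices $a$ and $d$ are inert, so we are left with $\epsilon^2\, v_a\, v_{[b}v_{c]}\, v_d$, which vanishes because the antisymmetrization of the product $v_bv_c$ of a covector with itself is zero; the sign drops out since it enters squared.

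For the forward direction, assume (\ref{Qeq}). Expanding the antisymmetrization over $b$ and $c$ shows it is equivalent to the family of scalar identities
\[
	Q_{ab}Q_{cd}=Q_{ac}Q_{bd}
\]
for all values of the indices in a chosen basis of $V$. If $Q$ vanishes identically we take $v=0$; otherwise I would first locate a nonzero diagonal component. Specializing to $c=a,\ d=b$ gives $(Q_{ab})^2=Q_{aa}Q_{bb}$, and since the left side is a nonnegative real number, not every $Q_{aa}$ can vanish (if they all did, every component of $Q$ would vanish). Pick the basis so that $Q_{11}\neq 0$, set $\epsilon=\mathrm{sign}(Q_{11})=\pm 1$, and define the covector
\[
	v_a=\frac{Q_{1a}}{\sqrt{\,|Q_{11}|\,}}.
\]
Then specializing the identity to $c=d=1$ and using the symmetry of $Q$ yields $Q_{ab}\,Q_{11}=Q_{1a}Q_{1b}$, and therefore
\[
	\epsilon\, v_a v_b=\frac{Q_{1a}Q_{1b}}{Q_{11}}=Q_{ab},
\]
which is (\ref{vveq}). (Conceptually, the identity just forces all columns of the symmetric matrix $Q$ to be pairwise proportional, so $Q$ has rank at most one; the componentwise argument above is simply the quickest way to produce the covector explicitly.)

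The computation is elementary, so there is no serious obstacle. The step most worth care is the extraction of the nonzero diagonal entry together with its sign: this is exactly where reality of $V$ is used in an essential way, and it is why the factor $\pm$ in (\ref{vveq}) cannot in general be removed and why $\sqrt{|Q_{11}|}$ is the correct normalization. Everything after that is a direct substitution into the single scalar relation $Q_{ab}Q_{11}=Q_{1a}Q_{1b}$.
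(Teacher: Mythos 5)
Your proof is correct, and it takes a genuinely different route from the paper's. The paper first invokes Sylvester's law of inertia to diagonalize $Q$ in a basis where its entries are $\pm 1, 0$, and then observes that the condition $Q_{a[b}Q_{c]d}=0$ forces the products $a_i a_j$ of distinct diagonal entries to vanish, leaving at most one nonzero entry. You instead work entirely in an arbitrary basis: you expand the antisymmetrization into the scalar identity $Q_{ab}Q_{cd}=Q_{ac}Q_{bd}$, use the specialization $(Q_{ab})^2=Q_{aa}Q_{bb}$ together with reality to locate a nonzero diagonal entry $Q_{11}$, and then read off $v_a = Q_{1a}/\sqrt{|Q_{11}|}$ from the specialization $Q_{ab}Q_{11}=Q_{1a}Q_{1b}$. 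Your version is more elementary (no appeal to the spectral/inertia theorem) and has the practical advantage of producing an explicit closed-form expression for $v_a$ in terms of the components of $Q$ in whatever basis one is handed --- which is well suited to the paper's stated goal of algorithmic, symbolic-computation-friendly constructions (e.g.\ extracting the fluid velocity from $K_{ab}$ in Corollary \ref{PFCor}). The paper's version, by contrast, makes the geometric content (rank one, definite sign) visible at a glance. Both arguments use the reality of $V$ in an essential and equivalent way: the paper through Sylvester's law, you through the nonnegativity of $(Q_{ab})^2$.
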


\begin{proof}
Eq. (\ref{vveq}) clearly implies (\ref{Qeq}).  We now show that (\ref{Qeq}) implies (\ref{vveq}). From Sylvester's law of inertia there exists a basis for $V^*$, denoted by $\e_i, i=1,2,\dots,n$,  in which  $Q_{ab}$ is diagonal with components given by $\pm1$, $0$. In this basis, using index-free notation, $Q$ takes the form:
\begin{equation}
	Q = \sum_{i=1}^n a_i \e_i \otimes \e_i,
\end{equation}
where $a_i \in \{-1, 0, 1\}$. In this basis, eq. (\ref{Qeq}) takes the form
\begin{equation}
	 \sum_{i, j = 1}^n a_i a_j (\e_i \otimes \e_i \otimes \e_j \otimes \e_j - \e_i \otimes \e_j \otimes \e_i \otimes \e_j)\\
	= 0.
\end{equation}
Consequently, $a_i a_j = 0$ for all $i \ne j$, from which it follows that all but one of the $\{a_i\}$ are zero (assuming $Q\neq 0$).  If the basis is labeled so that $a_1$ is the non-zero component then $v = \sqrt{|a_1|}\e_1$, and (\ref{vveq}) follows. 
\end{proof}

The following theorem gives a simple set of Rainich-type conditions  for a perfect fluid spacetime. 
\begin{Theorem}
 Let $(M, g)$ be an $n$-dimensional  spacetime, $n>2$. Define
 \begin{equation}
	\alpha = -\left[\frac{n^2}{(n-1)(n-2)} S_a^bS_b^cS_c^a\right]^{1/3}.
\label{alphaeq}
\end{equation}  
The metric $g$ defines a perfect fluid spacetime if and only if 
\begin{align}
&(1)  \quad \alpha \neq 0,\\
&(2)  \quad \K_{a[b} \K_{c]d} = 0,\\
&(3) \quad \K_{ab}v^av^b > 0,\ {\rm for\ some\ } v^a,
\label{pfcond}
\end{align}
where  
\begin{equation}
\K_{ab} = \frac{1}{\alpha}S_{ab} - \frac{1}{n}g_{ab}.
\label{heq}
\end{equation}
\label{PFThm}
\end{Theorem}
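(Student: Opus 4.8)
The plan is to prove both implications by reducing everything to the algebraic type of the trace-free Ricci tensor $S_{ab}$.

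\medskip\noindent
For the ``only if'' direction, I would begin from the Einstein equation (\ref{EEPF}), take its trace to express $G=g^{ab}G_{ab}$ in terms of $\tilde\mu$, $\tilde p$, and substitute back to obtain the identity
\begin{equation}
S_{ab} = (\tilde\mu + \tilde p)\left(u_a u_b + \frac{1}{n}\, g_{ab}\right).
\end{equation}
Using $u_a u^a = -1$, a direct computation then gives $S_a^{\ b}S_b^{\ c}S_c^{\ a} = -\frac{(n-1)(n-2)}{n^2}(\tilde\mu + \tilde p)^3$, so that the real cube root in (\ref{alphaeq}) collapses to $\alpha = \tilde\mu + \tilde p$. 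By the standing assumption $\tilde\mu + \tilde p \ne 0$ this establishes condition (1); then (\ref{heq}) reduces to $\K_{ab} = u_a u_b$, which satisfies condition (2) by the easy direction of Proposition \ref{prodvec}, and $\K_{ab} u^a u^b = 1 > 0$ supplies condition (3).

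\medskip\noindent
For the ``if'' direction, conditions (2) and (3) together with Proposition \ref{prodvec} force $\K_{ab} = v_a v_b$ for some nonzero covector $v_a$, the positivity in (3) selecting the $+$ sign. Hence $S_{ab} = \alpha\bigl(v_a v_b + \frac{1}{n} g_{ab}\bigr)$, and the crux is to pin down the norm $\lambda = g^{ab} v_a v_b$. Substituting this form of $S_{ab}$ back into the definition (\ref{alphaeq}) of $\alpha$ and cancelling the (nonzero) factor $\alpha^3$ leaves a single cubic in $\lambda$ which rearranges to $(n\lambda + 1)^3 = -(n-1)^3$; its only real root is $\lambda = -1$, so $v_a$ is automatically a unit timelike covector and we set $u_a = v_a$. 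Restoring the trace term then gives $G_{ab} = S_{ab} + \frac{1}{n} G g_{ab} = \alpha\, u_a u_b + \frac{\alpha + G}{n}\, g_{ab}$, which is exactly (\ref{EEPF}) with the assignments $\tilde\mu + \tilde p = \alpha$ and $\tilde p = \frac{\alpha + G}{n}$. Condition (1) guarantees $\tilde\mu + \tilde p \ne 0$, so $(M,g)$ is a perfect fluid spacetime, and these same formulas (together with $\tilde\mu = q\mu + \Lambda$, $\tilde p = qp - \Lambda$) construct $\mu$, $p$, and $u$ from $g$.

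\medskip\noindent
I expect the main obstacle to be precisely the step in the converse that forces the distinguished eigendirection of $S_{ab}$ to be timelike: Proposition \ref{prodvec} only yields $\K_{ab} = v_a v_b$ with $v_a$ a priori of arbitrary causal character, and it is the self-consistency of the cube-root normalization built into (\ref{alphaeq}) --- the fact that $(n\lambda + 1)^3 = -(n-1)^3$ has a unique real solution --- that rules out spacelike or null $v_a$ and thereby singles out genuine perfect-fluid configurations. Everything else (the trace identity for $S_{ab}$, the evaluation of the cubic invariant $S_a^{\ b}S_b^{\ c}S_c^{\ a}$, and the reconstruction of $\tilde\mu$, $\tilde p$, $u$) is routine algebra once that point is secured.
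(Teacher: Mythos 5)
Your proposal is correct and follows the same overall structure as the paper's proof: necessity via $S_{ab}=(\tilde\mu+\tilde p)(u_au_b+\frac{1}{n}g_{ab})$ and the evaluation $\alpha=\tilde\mu+\tilde p$; sufficiency via Proposition \ref{prodvec} and the reconstruction $\tilde\mu+\tilde p=\alpha$, $\tilde p=\frac{1}{n}(G+\alpha)$. The one place you diverge is the step you flag as the crux --- showing that the covector $v_a$ produced by Proposition \ref{prodvec} is unit timelike. Your cubic self-consistency argument, substituting $S_{ab}=\alpha(v_av_b+\frac1n g_{ab})$ back into (\ref{alphaeq}) and solving $(n\lambda+1)^3=-(n-1)^3$ for the unique real root $\lambda=-1$, is valid (I checked the algebra), but it is considerably more work than the paper's one-line observation: since $S_{ab}$ is trace-free, (\ref{heq}) gives $\K_a{}^a=-1$ identically, so $g^{ab}v_av_b=\K_a{}^a=-1$ with no computation at all. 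This also shows that the specific normalization $\frac{n^2}{(n-1)(n-2)}$ and the sign in (\ref{alphaeq}) are not what forces $v_a$ to be unit timelike in the converse; they are needed instead to make $\alpha$ equal $\tilde\mu+\tilde p$ exactly, so that condition (2) is \emph{necessary}. Your reading slightly misattributes the role of the normalization, but the proof itself is sound.
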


\begin{proof}
We begin by showing the conditions are necessary. Suppose the Einstein equations (\ref{EEPF}) are satisfied for some $(g, \tilde \mu, \tilde p, u)$. The trace-free Einstein tensor takes the form
\begin{equation}
	S_{ab}  = (\tilde \mu + \tilde p) \left( u_a u_b +\frac{1}{n} g_{ab} \right).
\label{tfeepf}
\end{equation}
Equations (\ref{tfeepf}) and (\ref{alphaeq}) yield
\begin{equation}
	\alpha = \tilde\mu + \tilde p,
	\label{alph}
\end{equation}
which implies condition (1) since we are always assuming that $\tilde \mu+ \tilde p \neq 0$.
From (\ref{alph}), (\ref{heq}), and (\ref{EEPF}) we have that
\begin{equation}
\quad \K_{ab} = u_a u_b,
\end{equation}
which implies conditions (2) and (3).

We now check that conditions (1)--(3) are sufficient.  Condition (1) permits $\K_{ab}$ to be defined. From Proposition \ref{prodvec}, condition (2) implies there exists a covector field $u_a$ such that $\K_{ab} = \pm u_a u_b$, while condition (3) picks out the positive  sign, $\K_{ab} =  u_a u_b$.  From the definition (\ref{heq}) of $\K_{ab}$ it follows that $g_{ab}u^a u^b = -1$.   Defining $\tilde\mu$ and $\tilde p$ via
\begin{equation}
\tilde \mu + \tilde p = \alpha, \quad  \tilde p = \frac{1}{n} (G + \alpha),
\end{equation}
it follows the perfect fluid Einstein equations (\ref{EEPF}) are satisfied. 
\end{proof}

From the proof of this theorem we obtain a prescription for construction of the fluid variables from a metric satisfying the conditions (1)--(3).
\begin{Corollary}
If $(M, g)$ is an $n$-dimensional spacetime satisfying the conditions of Theorem \ref{PFThm} then it is a perfect fluid spacetime with energy density $\tilde\mu$ and pressure $\tilde p$ given by
\begin{equation}
\tilde \mu = \frac{1}{n}[(n-1) \alpha - G],\quad \tilde p = \frac{1}{n}(\alpha + G),
\end{equation}
and  fluid velocity $u^a$ determined (up to an overall sign) from the quadratic condition
\begin{equation}
u_a u_b = \K_{ab}.
\end{equation}
\label{PFCor}
\end{Corollary}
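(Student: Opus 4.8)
The plan is to read off the construction directly from the sufficiency half of the proof of Theorem \ref{PFThm}, since the Corollary merely records the explicit formulas that proof produces. First I would invoke condition~(1) to guarantee $\alpha \neq 0$, so that $\K_{ab}$ in (\ref{heq}) is well defined. Applying Proposition \ref{prodvec} to the symmetric tensor $\K_{ab}$, condition~(2) then yields a covector field $u_a$ with $\K_{ab} = \pm u_a u_b$, and condition~(3) forces the plus sign, so that $\K_{ab} = u_a u_b$.

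Next I would establish the normalization $g_{ab} u^a u^b = -1$. Taking the metric trace of the defining relation $\K_{ab} = \frac{1}{\alpha} S_{ab} - \frac{1}{n} g_{ab}$ and using that $S_{ab}$ is trace-free gives $g^{ab}\K_{ab} = -1$; since $g^{ab}\K_{ab} = g^{ab} u_a u_b = g_{ab}u^a u^b$, the field $u$ is a unit timelike vector field, as required by the definition of a perfect fluid spacetime. This computation also shows that $u^a$ is determined up to an overall sign by the quadratic equation $u_a u_b = \K_{ab}$.

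It then remains to produce $\tilde\mu$ and $\tilde p$. Rewriting $\K_{ab} = u_a u_b$ in terms of the Einstein tensor via $S_{ab} = G_{ab} - \frac{1}{n} G g_{ab}$, I would obtain $G_{ab} = \alpha\, u_a u_b + \frac{1}{n}(\alpha + G)\, g_{ab}$. Comparing this with the perfect fluid form (\ref{EEPF}), namely $G_{ab} = (\tilde\mu + \tilde p) u_a u_b + \tilde p\, g_{ab}$, I would identify $\tilde\mu + \tilde p = \alpha$ and $\tilde p = \frac{1}{n}(\alpha + G)$; solving this pair of linear equations gives $\tilde p = \frac{1}{n}(\alpha + G)$ and $\tilde\mu = \alpha - \frac{1}{n}(\alpha + G) = \frac{1}{n}[(n-1)\alpha - G]$, which are precisely the claimed formulas.

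Since every step is a direct algebraic consequence of Theorem \ref{PFThm} and Proposition \ref{prodvec}, I do not expect any real obstacle. The only points requiring a little care are confirming the normalization $g_{ab}u^a u^b = -1$ from the trace identity, and checking that matching the full tensor equation (rather than merely its trace) forces the stated values of $\tilde\mu$ and $\tilde p$; both become immediate once $\K_{ab} = u_a u_b$ is in hand, so in effect the Corollary is just a bookkeeping summary of the theorem's proof.
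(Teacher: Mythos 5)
Your proposal is correct and follows essentially the same route as the paper, which derives the Corollary directly from the sufficiency half of the proof of Theorem \ref{PFThm}: conditions (1)--(3) plus Proposition \ref{prodvec} give $\K_{ab}=u_au_b$ with $g_{ab}u^au^b=-1$, and matching $G_{ab}=\alpha\,u_au_b+\frac{1}{n}(\alpha+G)g_{ab}$ to the fluid form yields the stated $\tilde\mu$ and $\tilde p$. The only difference is that you spell out the trace computation and the linear solve that the paper leaves implicit.
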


\subsection{Example: A  static, spherically symmetric perfect fluid}

Here we use  Theorem \ref{PFThm} to find fluid solutions.  
Consider the following simple ansatz for a class of static, spherically symmetric spacetimes:
\begin{equation}
g = -r^2 dt \otimes dt + f(r) dr \otimes dr + r^2(d\theta \otimes d\theta + \sin^2\theta d\phi \otimes d\phi).
\label{fluid1}
\end{equation}
Here $f$ is a function to be determined. Computation of the tensor field $K$  and imposition the quadratic condition $K_{a[b}K_{c]d} = 0$ leads to a system of non-linear ordinary differential equations for $f(r)$ which can be reduced to:
\begin{equation}
r f^\prime  + 2f- f^2 = 0.
\end{equation}  
This has the 1-parameter family of solutions:
\begin{equation}
f(r) = \frac{2}{1 + \lambda r^2},
\label{fluid1b}
\end{equation}
where $\lambda$ and $r$ are restricted by $1 + \lambda r^2 > 0$ to give the metric  Lorentz signature.  With $f(r)$ so determined, the scalar $\alpha$ and the tensor $K$ are computed to be
\begin{equation}
\alpha = \frac{1}{r^2},\quad K = r^2 dt \otimes dt,
\end{equation}
from which it immediately follows that all 3 conditions of Theorem \ref{PFThm} are satisfied. 

From Corollary \ref{PFCor} the energy density, $\tilde \mu$, pressure $\tilde p$ and 4-velocity $u$ are given by
\begin{equation}
\tilde \mu = \frac{1}{2}[\frac{1}{r^2} - 3\lambda],\quad
\tilde p = \frac{1}{2}[\frac{1}{r^2} + 3\lambda],\quad
u = \frac{1}{r}\partial_t.
\end{equation}
If desired, one can interpret this solution as admitting a cosmological constant $\Lambda = - \frac{3}{2}\lambda$ and a stiff equation of state, $\mu = p = 1/(2qr^2)$. 


\subsection{Example: A class of 5-dimensional cosmological fluid solutions}

In this example we use  Theorem \ref{PFThm} to construct a class of cosmological perfect fluid solutions on a 5-dimensional spacetime $(M, g)$ where $M = {\bf R}\times \Sigma^4$ with $\Sigma^4 = {\bf R}^3 \times {\bf R}$ being homogeneous and anisotropic. We start from a 4-parameter family of metrics of the form
\begin{equation}
g = - dt \otimes dt + r_0^2 t^{2b} (dx \otimes dx + dy\otimes dy + dz\otimes dz) + R_0^2 t^{2\beta} dw \otimes dw,
\end{equation}
where $r_0$, $R_0$, $b$, and $\beta$ are parameters to be determined.  This metric defines a family of spatially flat 3+1 dimensional FRW-type universes with  $(x, y, z)$ coordinates, each with an extra dimension $w$ described with its own scale factor.  Using  Theorem \ref{PFThm} we select metrics from this set which solve the perfect fluid Einstein equations. 

Condition (2) of Theorem \ref{PFThm} leads to a system of algebraic equations for $b$ and $\beta$ from which we have found 3  solutions:
\begin{equation}
{\rm (i)}\ b = \frac{\beta(\beta-1)}{\beta+2}, \quad {\rm (ii)}\ b = -\frac{1}{3}(\beta-1),\quad {\rm (iii)}\  b = \beta.
\end{equation} 
Case (i) can be eliminated from consideration since
\begin{equation}
b = \frac{\beta(\beta-1)}{\beta+2}\quad \Longrightarrow\quad K = -R_0^2 t^{2\beta} dw \otimes dw,
\end{equation}
which violates condition (3) of Theorem \ref{PFThm}.  Cases (ii) and (iii) satisfy all the conditions of Theorem \ref{PFThm}. Using Corollary \ref{PFCor} we obtain solutions of the Einstein equations as follows:
\begin{align}
&{\rm (ii)}\ b = -\frac{1}{3}(\beta-1)\quad \Longrightarrow\quad K = dt\otimes dt,\quad u = \partial_t,\quad \tilde\mu=\tilde p=-\frac{1}{3}\frac{2\beta^2-\beta-1}{t^2}\\
&{\rm (iii)}\ b = \beta\quad \Longrightarrow\quad K = dt\otimes dt,\quad u = \partial_t\quad \tilde \mu = \frac{6\beta^2}{t^2}, 
\tilde  p =-\frac{3\beta(2\beta-1)}{t^2}.
\end{align} 
Case (ii) is anisotropic (except when $b = \beta = \frac{1}{4}$) and allows for any combination of expansion and contraction for the $(x, y, z)$ and $w$ dimensions. Case (iii) is isotropic in all four spatial dimensions. Both cases are singular as $t\to 0$. 

\section{Scalar Fields}

Kucha\v r has given Rainich-type geometrization conditions for a minimally-coupled, free scalar field in four spacetime dimensions without a cosmological constant \cite{Kuchar}.  Here we generalize his treatment to a real scalar field with any self-interaction, in any dimension, and including the possibility of a cosmological constant.

The Einstein-scalar field equations for a spacetime $(M, g)$ with a minimally coupled real scalar field $\psi$, with self interaction potential $V(\psi)$, and with cosmological constant $\Lambda$ are given by
\begin{equation}
	R_{ab} - \frac{1}{2}R g_{ab} + \Lambda g_{ab}= q\left\{\psi_{;a} \psi_{;b} - \frac{1}{2}  g^{lm} \psi_{;l} \psi_{;m}g_{ab} - V(\psi)  g_{ab}\right\},
\label{esf}
\end{equation}
\begin{equation}
	g^{lm}\psi_{;lm} - V^\prime(\psi) = 0,
\label{kg}
\end{equation}
where  $q = {8\pi \G}/{c^4}$ and we use a semicolon to denote the usual torsion-free, metric-compatible covariant derivative.   

We distinguish two classes of solutions to the Einstein-scalar field equations.  If a solution has $\psi_{;a}\psi_;{}^a \neq 0 $ everywhere we say that the solution is {\it non-null}.  If the solution has $\psi_{;a}\psi_;{}^a = 0 $ everywhere we say that the solution is {\it null}.

The Rainich-type conditions we shall obtain require the following extension of Proposition \ref{prodvec} ({\it c.f.} \cite{Kuchar}).
\begin{Proposition}
Let $Q_{ab}$ be a symmetric tensor field on a manifold $M$.  Then (locally on $M$) there exists a function $\phi$ such that
\begin{equation}
Q_{ab} = \pm \phi_{;a} \phi_{;b}
\label{Q2eq}
\end{equation}
if and only if $Q_{ab}$ satisfies 
\begin{align}
	&(1)\ Q_{a[b}Q_{c]d} =0,\\
	&(2)\ Q_{ab}Q_{c[d;e]} + Q_{ac}Q_{b[d;e]} + Q_{bc;[d}  Q_{e]a}= 0.
\end{align}
\begin{proof}
Using Proposition \ref{prodvec}, condition (1) is necessary and sufficient for the existence of a 1-form $v_a$ such that $Q_{ab} = \pm v_a v_b$. We then have 
\begin{equation}
	Q_{bc;[d}  Q_{e]a} = v_a v_b v_{c;[d}v_{e]} + v_a v_c  v_{b;[d} v_{e]},
\end{equation}
and  
\begin{equation}
	Q_{c[d;e]} = \pm( v_{c;[e} v_{d]}  + v_c v_{[d;e]}),
\end{equation}
and hence
\begin{equation}
	Q_{ab}Q_{c[d;e]} + Q_{ac}Q_{b[d;e]} + Q_{bc;[d}Q_{e]a} = 2v_a v_b v_c v_{[d;e]}  .
\end{equation}
Consequently, if (\ref{Q2eq}) holds then condition (2) holds and, conversely, condition (2) implies  the 1-form $v_a$ is closed, hence locally exact.
\label{curlgrad}
\end{proof}
\end{Proposition}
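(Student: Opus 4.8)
The plan is to reduce everything, via Proposition~\ref{prodvec}, to a statement about a single $1$-form and then to recognize condition~(2) as precisely the assertion that this $1$-form is closed. For the ``only if'' direction, given $Q_{ab} = \pm\phi_{;a}\phi_{;b}$ I set $v_a = \phi_{;a}$; since the connection is torsion-free, $v_{[a;b]} = \phi_{;[ab]} = 0$, i.e.\ $v_a$ is closed. Condition~(1) is then immediate from Proposition~\ref{prodvec}, and for condition~(2) I would substitute $Q_{ab} = \pm v_a v_b$ into its left-hand side; expanding by the Leibniz rule and using the antisymmetry of the bracket $[d;e]$, all terms carrying a derivative of the form $v_a v_b v_{c;[d}v_{e]}$ or $v_a v_c v_{b;[d}v_{e]}$ cancel in pairs, leaving
\begin{equation}
Q_{ab}Q_{c[d;e]} + Q_{ac}Q_{b[d;e]} + Q_{bc;[d}Q_{e]a} = 2\,v_a v_b v_c\, v_{[d;e]},
\label{plankey}
\end{equation}
which vanishes because $v$ is closed.

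For the converse, condition~(1) together with Proposition~\ref{prodvec} furnishes, at each point, a covector with $Q_{ab} = \pm v_a v_b$. I would first upgrade this to a smooth local $1$-form on the open set where $Q_{ab}\neq 0$: choosing a local vector field $w^a$ with $Q_{ab}w^a w^b \neq 0$ and putting $v_a = Q_{ab}w^b/\sqrt{|Q_{cd}w^c w^d|}$, the identity $Q_{ab}Q_{cd} = Q_{ac}Q_{bd}$ (condition~(1) rewritten without brackets) gives $v_a v_b = \pm Q_{ab}$ with a locally constant sign. Substituting this $v_a$ into condition~(2) and invoking~\eqref{plankey} yields $v_a v_b v_c\, v_{[d;e]} = 0$; since $v_a\neq 0$ on this set, contracting three slots against any vector $X^a$ with $X^a v_a = 1$ gives $v_{[d;e]} = 0$. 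Thus $v_a$ is closed, hence (Poincar\'e lemma) locally exact, $v_a = \phi_{;a}$, and therefore $Q_{ab} = \pm\phi_{;a}\phi_{;b}$.

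The step needing genuine care, rather than bookkeeping, is the passage from the pointwise factorization of Proposition~\ref{prodvec} to a single smooth $1$-form $v_a$ with a consistently chosen overall sign, together with the behaviour at points where $Q_{ab}$ degenerates to zero, where $v_a$ is determined only on the complement of the zero set. This is what the word ``locally'' in the statement buys us: one works on a connected neighborhood of a point at which $Q_{ab}\neq 0$ (the null versus non-null dichotomy introduced below organizes the remaining possibilities), and there the construction is unambiguous. Everything else is the substitution producing~\eqref{plankey} and the Poincar\'e lemma.
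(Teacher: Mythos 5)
Your proposal is correct and follows essentially the same route as the paper: reduce to $Q_{ab}=\pm v_av_b$ via Proposition~\ref{prodvec}, verify the key identity $Q_{ab}Q_{c[d;e]} + Q_{ac}Q_{b[d;e]} + Q_{bc;[d}Q_{e]a} = 2v_av_bv_cv_{[d;e]}$, and conclude that condition~(2) is equivalent to closedness of $v$, hence local exactness by the Poincar\'e lemma. The additional care you take in constructing a smooth, sign-consistent local $1$-form from the pointwise factorization is a point the paper passes over silently, but it does not change the argument.
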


We shall use the following notation:
\begin{equation}
G_{ab} = R_{ab} - \frac{1}{2}Rg_{ab},\quad G = G_a^a,\quad {}_2G_{ab} = G_a^cG_{cb},\quad {}_2G = G_{ab}G^{ab},
\end{equation}
\begin{equation}
{}_3G_{ab} = G_a^cG_c^dG_{db},\quad {}_3G =  G_a^cG_c^dG_{d}^a.
\end{equation}


The geometrization theorems we shall obtain will depend upon whether the self-interaction potential $V(\psi)$ is present.  We begin with a free, massless, non-null scalar field.  
\begin{Theorem}
Let $(M, g)$ be an $n$-dimensional spacetime, $n>2$. The following are  necessary and sufficient conditions on $g$ such that there exists a scalar field $\psi$  with $(g, \psi)$ defining a local, non-null solution to the Einstein-scalar equations (\ref{esf}), (\ref{kg}) with $V(\psi) = 0$:
\begin{equation}
 {}_2G - \frac{1}{n}G^2 \neq 0,
 \label{sf5}
 \end{equation}
 \begin{equation}
 A_{;i}=0,
 \label{sf0}
 \end{equation}
\begin{equation}
	H_{a[b}H_{c]d} = 0,
\label{sf1}
\end{equation}
\begin{equation}
	 H_{ab}H_{c[d;e]} + H_{ac}H_{b[d;e]} + H_{bc;[d}H_{e]a} = 0, 
\label{sf2}
\end{equation}
\begin{equation}
	H_{ab} w^a w^b > 0,\quad {\rm for\ some\ } w^a,
\label{sf4}
\end{equation}
where we define
\begin{equation}
	A = \frac{1}{2}\frac{\frac{1}{n} G\; _2G - \,_3G}{_2G - \frac{1}{n}G^2},
\label{A2def}
\end{equation}
and 
\begin{equation}
	H_{ab} = G_{ab} + \frac{1}{2-n} \left(G + 2A \right) g_{ab}.
\label{H2def}
\end{equation}

\begin{proof}
We begin by showing the conditions are necessary. Suppose $(g, \psi)$ define a non-null solution to the Einstein-scalar field equations with $V(\psi)=0$.  From  the Einstein equations (\ref{esf}) we have
\begin{equation}
	 G_{ab} - \frac{1}{n} G g_{ab} =q\left( \psi_{;a} \psi_{;b} - \frac{1}{n} g^{lm} \psi_{;l} \psi_{;m} g_{ab}\right),
	 \label{Seq}
\end{equation}
and
\begin{equation}
	_2G_{ab} - \frac{1}{n}\, {}_2G g_{ab} = -{2}{q} \Lambda \left( \psi_{;a} \psi_{;b} - \frac{1}{n} g^{lm} \psi_{;l} \psi_{;m} g_{ab} \right).
\end{equation}
These  two equations imply
\begin{equation}
	\,_2G_{ab} - \frac{1}{n}\, _2G g_{ab}=-2\Lambda (G_{ab} - \frac{1}{n} G g_{ab} ).
\label{A1}
\end{equation}
Now multiplying (\ref{A1}) by $G^{ab}$ we get
\begin{equation}
	{}_3G - \frac{1}{n} G\; _2G  = -2 \Lambda ({}_2G - \frac{1}{n}G^2),
\label{A2}
\end{equation}
while contracting (\ref{Seq}) with $G^{ab}$ gives
\begin{equation}
 {}_2G - \frac{1}{n}G^2 = \frac{n-1}{n} q^2\left(g^{ab}\psi_{;a}\psi_{;b}\right)^2 \neq 0.
 \end{equation}
 Equation (\ref{A2}) then implies
 \begin{equation}
 A\equiv \frac{1}{2}\frac{\frac{1}{n} G\; _2G - {}_3G}{{}_2G - \frac{1}{n}G^2} = \Lambda.
 \end{equation}
 It follows that $A_{;i} = 0$ and then, from the Einstein equations, we have that
 \begin{equation}
 H_{ab} = q \psi_{;a}\psi_{;b}, 
 \end{equation}
 from which follow the rest of the conditions, (\ref{sf1}), (\ref{sf2}), (\ref{sf4}). 
 
 Conversely, suppose the conditions (\ref{sf5})--(\ref{sf4}) are satisfied. From Proposition \ref{curlgrad},  equations (\ref{sf1}), (\ref{sf2}), (\ref{sf4}) imply that there exists a function $\psi$ such that $H_{ab} = q\psi_{;a}\psi_{;b}$, while (\ref{sf0}) implies that $A= const.\equiv  \Lambda$.  Together, these results imply that:
 \begin{equation}
 G_{ab}  = q\left( \psi_{;a}\psi_{;b} - \frac{1}{2} g_{ab}g^{mn}\psi_{;m}\psi_{;n}\right) - \Lambda g_{ab}\,
 \end{equation}
 so that the Einstein equations are satisfied by $(g, \psi)$.
  The contracted Bianchi identity now implies
  \begin{equation}
  \psi_{;c} g^{ab}\psi_{;ab} = 0,
  \end{equation}
 and the non-null condition (\ref{sf5}) yields $g^{ab}\psi_{;a}\psi_{;b} \neq 0$, which enforces the scalar field equation (\ref{kg}).
\end{proof}
\label{nonullfreesf}
\end{Theorem}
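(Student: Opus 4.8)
The plan is to establish the two implications separately. The structural fact driving everything is that, for an honest non-null solution, the Einstein tensor has the algebraically special form $G_{ab}=q\,\psi_{;a}\psi_{;b}+\sigma\,g_{ab}$ with $\sigma$ a scalar, so that $G_{ab}$, its ``square'' ${}_2G_{ab}=G_a^cG_{cb}$, and $g_{ab}$ all lie in the two-dimensional subspace of symmetric tensors spanned by $g_{ab}$ and $\psi_{;a}\psi_{;b}$. For necessity I would first insert the Einstein equations (\ref{esf}) with $V=0$, compute ${}_2G_{ab}$ explicitly, and subtract the traces to obtain the purely algebraic identity ${}_2G_{ab}-\tfrac1n\,{}_2G\,g_{ab}=-2\Lambda\big(G_{ab}-\tfrac1n G\,g_{ab}\big)$. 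Contracting this with $G^{ab}$ gives ${}_3G-\tfrac1n G\,{}_2G=-2\Lambda\big({}_2G-\tfrac1n G^2\big)$, while contracting the trace-free Einstein equation with $G^{ab}$ gives ${}_2G-\tfrac1n G^2=\tfrac{n-1}{n}q^2\big(g^{lm}\psi_{;l}\psi_{;m}\big)^2$, which is nonzero precisely because the solution is non-null; this is (\ref{sf5}), and dividing the two relations identifies the scalar $A$ of (\ref{A2def}) with the cosmological constant $\Lambda$, whence $A_{;i}=0$, which is (\ref{sf0}). Taking the trace of (\ref{esf}) to solve for $\sigma$ in terms of $G$ and $\Lambda$ and substituting back into the definition (\ref{H2def}) then yields exactly $H_{ab}=q\,\psi_{;a}\psi_{;b}$; Proposition \ref{curlgrad} applied to $Q_{ab}=H_{ab}$ supplies (\ref{sf1}) and (\ref{sf2}), and (\ref{sf4}) holds because $q>0$ and $\psi_{;a}\neq0$.

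For sufficiency I would run the same chain backwards. Conditions (\ref{sf1}), (\ref{sf2}) together with Proposition \ref{curlgrad} produce, locally, a function $\phi$ with $H_{ab}=\pm\phi_{;a}\phi_{;b}$; condition (\ref{sf4}) forces the plus sign, and setting $\psi:=\phi/\sqrt{q}$ gives $H_{ab}=q\,\psi_{;a}\psi_{;b}$. Condition (\ref{sf0}) lets me put $\Lambda:=A=\mathrm{const}$. Rearranging (\ref{H2def}) gives $G_{ab}=q\,\psi_{;a}\psi_{;b}-\tfrac{1}{2-n}\big(G+2\Lambda\big)g_{ab}$; taking the trace of $H_{ab}=q\,\psi_{;a}\psi_{;b}$ expresses $g^{lm}\psi_{;l}\psi_{;m}$ through $G$ and $\Lambda$ in exactly the way needed to recast the trace term as $-\big(\tfrac q2 g^{lm}\psi_{;l}\psi_{;m}+\Lambda\big)g_{ab}$, so that (\ref{esf}) with $V=0$ holds identically. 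Finally, feeding this expression for $G_{ab}$ into the contracted Bianchi identity makes the $\psi_{;ab}$ terms cancel and leaves $\big(g^{cd}\psi_{;cd}\big)\psi_{;b}=0$; condition (\ref{sf5}), rewritten via ${}_2G-\tfrac1n G^2=\tfrac{n-1}{n}q^2\big(g^{lm}\psi_{;l}\psi_{;m}\big)^2$, forces $g^{lm}\psi_{;l}\psi_{;m}\neq0$, hence $\psi_{;b}\neq0$, hence $g^{cd}\psi_{;cd}=0$, which is (\ref{kg}) with $V=0$, and which simultaneously certifies that the solution is non-null.

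The real work is not in any of these contractions but in choosing the two auxiliary objects correctly: recognizing that the particular ratio of curvature scalars in (\ref{A2def}) is the quantity that must equal $\Lambda$ (so that $\Lambda$ is read off algebraically from $G_{ab}$ and ${}_2G_{ab}$ alone), and then tuning the trace coefficient in (\ref{H2def}) so that its trace automatically reproduces the $\tfrac12 g^{lm}\psi_{;l}\psi_{;m}$ term of the scalar-field stress tensor --- it is this tuning that upgrades ``$H_{ab}=q\,\psi_{;a}\psi_{;b}$'' from a statement about the trace-free Einstein tensor to the full Einstein equation. A subsidiary point requiring care is regularity at points where $\psi_{;a}$ vanishes; since (\ref{sf5}) is a pointwise open condition on $g$, it is exactly what rules such points out, makes Proposition \ref{curlgrad} applicable, and guarantees the reconstructed scalar field is genuinely non-null.
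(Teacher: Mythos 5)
Your proposal is correct and follows essentially the same route as the paper's own proof: the same trace subtractions identifying $A=\Lambda$ and ${}_2G-\tfrac1n G^2$ with the norm of $\psi_{;a}$, the same reduction to $H_{ab}=q\psi_{;a}\psi_{;b}$ via Proposition \ref{curlgrad}, and the same use of the contracted Bianchi identity plus (\ref{sf5}) to recover the wave equation in the converse direction. Your closing remarks on the tuning of the trace coefficient in (\ref{H2def}) and on (\ref{sf5}) excluding critical points of $\psi$ are accurate glosses rather than deviations.
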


From the proof just given it is clear the scalar field is determined from the metric by solving a system of quadratic equations followed by a simple integration.
\begin{Corollary}
Let $(M, g)$ satisfy the conditions of Theorem \ref{nonullfreesf}.  Then $(g, \psi)$ satisfy the Einstein-scalar field equations, with $V = 0$, with $\Lambda = A$, and with $\psi$ determined  up to an additive constant and up to a sign by
\begin{equation}
\psi_{;a}\psi_{;b} = \frac{1}{q}H_{ab}.
\end{equation}
\label{nonnullsfcor}
\end{Corollary}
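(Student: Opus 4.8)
The proof is essentially a restatement of the constructive content already present in the sufficiency direction of the proof of Theorem \ref{nonullfreesf}; the plan is simply to isolate that content and carefully track the ambiguities in $\psi$. First I would invoke Proposition \ref{curlgrad} with $Q_{ab} = H_{ab}$: conditions (\ref{sf1}) and (\ref{sf2}) are exactly hypotheses (1) and (2) of that proposition, so locally there is a function $\chi$ with $H_{ab} = \pm\,\chi_{;a}\chi_{;b}$. The sign ambiguity here is inherited from Proposition \ref{prodvec}, which fixes the underlying covector only up to sign; condition (\ref{sf4}) rules out the minus sign, leaving $H_{ab} = \chi_{;a}\chi_{;b}$. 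Setting $\psi = q^{-1/2}\chi$ gives $\psi_{;a}\psi_{;b} = \frac{1}{q}H_{ab}$. The only remaining freedom in $\psi$ is the additive constant arising when the closed $1$-form of Proposition \ref{curlgrad} is integrated to a potential, together with the overall sign just discussed; this is precisely the asserted uniqueness.

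Next, condition (\ref{sf0}) states that $A$ has vanishing gradient, hence is locally constant; denote this constant by $\Lambda$. Substituting $H_{ab} = q\psi_{;a}\psi_{;b}$ into the definition (\ref{H2def}) of $H_{ab}$ and using $A \equiv \Lambda$ reproduces, exactly as in the proof of Theorem \ref{nonullfreesf},
\[
G_{ab} = q\left(\psi_{;a}\psi_{;b} - \tfrac12 g_{ab}\,g^{mn}\psi_{;m}\psi_{;n}\right) - \Lambda\, g_{ab},
\]
which is (\ref{esf}) with $V=0$ and cosmological constant $\Lambda = A$. Applying the contracted Bianchi identity to this equation gives $\psi_{;c}\,g^{ab}\psi_{;ab}=0$, and the non-null hypothesis (\ref{sf5}), equivalent to $g^{ab}\psi_{;a}\psi_{;b}\neq 0$, then forces $g^{ab}\psi_{;ab}=0$, i.e.\ the scalar field equation (\ref{kg}) with $V=0$. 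This establishes every clause of the Corollary.

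There is no genuine obstacle here: the substance of the argument was already carried out in the proof of Theorem \ref{nonullfreesf}, so the Corollary is really just a bookkeeping statement making the construction explicit. The only points deserving a little care are the precise accounting of the two sources of non-uniqueness in $\psi$ — the overall sign, which is pinned down by (\ref{sf4}), and the additive constant, which is the integration constant for the closed $1$-form — and the harmless rescaling by $q^{-1/2}$ relating the potential produced by Proposition \ref{curlgrad} to the physically normalized scalar field $\psi$.
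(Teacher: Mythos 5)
Your proposal is correct and follows essentially the same route as the paper, which simply observes that the Corollary is read off from the sufficiency half of the proof of Theorem \ref{nonullfreesf}: Proposition \ref{curlgrad} applied to $H_{ab}$ with the sign fixed by (\ref{sf4}), $A$ constant by (\ref{sf0}), and the Bianchi identity plus (\ref{sf5}) yielding the field equation. Your extra bookkeeping of the sign and additive-constant ambiguities is consistent with the paper's statement.
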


We  turn to the special case of free, massless, null solutions, that is, solutions in which $g^{ab}\psi_{;a}\psi_{;b}=0$.
\begin{Theorem}
Let $(M, g)$ be an $n$-dimensional spacetime, $n > 2$. The following are  necessary and sufficient conditions on $g$ such that there exists a scalar field $\psi$ with $(g, \psi)$ defining a local, null solution of the Einstein-scalar field equations with $V(\psi) = 0$:
\begin{equation}
R = \frac{2n}{n-2}\Lambda,
\label{nsf0}
\end{equation}
\begin{equation}
	S_{a[b}S_{c]d} = 0,
\label{nsf1}
\end{equation}
\begin{equation}
	 S_{ab}S_{c[d;e]} + S_{ac}S_{b[d;e]} + S_{bc;[d} S_{e]a} = 0, 
\label{nsf2}
\end{equation}
\begin{equation}
	S_{ab} w^a w^b > 0 \quad{\rm for\ some\ } w^a,
\label{nsf4}
\end{equation}
where $S_{ab}$ is the trace-free Ricci tensor.
\begin{proof}
Suppose the Einstein equations (\ref{esf}) with $V(\psi)=0$ are satisfied for some $(g, \psi)$  where $g^{lm}\psi_{;l} \psi_{;m} = 0$. Taking the trace of (\ref{esf}) leads to (\ref{nsf0}) and the trace-free part yields
\begin{equation}
	S_{ab} = q\psi_{;a} \psi_{;b},
\end{equation} 
from which (\ref{nsf1}), (\ref{nsf2}), and (\ref{nsf4}) follow.   

Conversely, using Proposition \ref{curlgrad},  conditions (\ref{nsf1}), (\ref{nsf2}), and (\ref{nsf4}) imply that locally there exists a function $\psi$ such that
\begin{equation}
S_{ab} = q\psi_{;a} \psi_{;b}.
\end{equation}
Note that this implies $g^{lm}\psi_{;l} \psi_{;m} = 0$.  Using (\ref{nsf0}) and $S_{ab}$ to construct the Einstein tensor leads to the free, massless, null field Einstein equations:
\begin{equation}
G_{ab} =q \psi_{;a} \psi_{;b} - \Lambda g_{ab}.
\end{equation}
The contracted Bianchi identity, along with $g^{lm}\psi_{;l} \psi_{;m} = 0$, then implies the field equation
\begin{equation}
g^{ab} \psi_{;ab} = 0.
\end{equation}
\end{proof}
\label{nullfreesf}
\end{Theorem}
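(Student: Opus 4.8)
\bigskip
\noindent\textbf{Proof proposal.}
The plan is to follow the two-part template of the proof of Theorem~\ref{nonullfreesf}: handle the trace-free part of the Einstein equations with Proposition~\ref{curlgrad}, and pin down the trace with the scalar relation~(\ref{nsf0}). The essential simplification relative to the non-null case is that for a null field the term $-\half g^{lm}\psi_{;l}\psi_{;m}\,g_{ab}$ disappears from~(\ref{esf}), so the matter source reduces to $q\,\psi_{;a}\psi_{;b}$; this is also why~(\ref{nsf0}) is purely algebraic in $R$ rather than a derivative condition on some derived scalar (contrast the function $A$ in Theorem~\ref{nonullfreesf}).

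First I would prove necessity. Assuming $(g,\psi)$ is a null solution with $V=0$, so $g^{lm}\psi_{;l}\psi_{;m}=0$, equation~(\ref{esf}) becomes $G_{ab}+\Lambda g_{ab}=q\,\psi_{;a}\psi_{;b}$. Taking the trace and using $G=-\tfrac{n-2}{2}R$ gives $R=\tfrac{2n}{n-2}\Lambda$, which is~(\ref{nsf0}); subtracting the trace then leaves $S_{ab}=G_{ab}-\tfrac1n Gg_{ab}=q\,\psi_{;a}\psi_{;b}$. Since $q>0$, Proposition~\ref{curlgrad} applied to $Q_{ab}=S_{ab}$ yields~(\ref{nsf1}) and~(\ref{nsf2}), and~(\ref{nsf4}) follows from $S_{ab}w^aw^b=q(\psi_{;a}w^a)^2$, positive for suitable $w^a$ whenever $\psi$ is non-constant (a genuine null field rather than vacuum).

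Next, sufficiency. Given~(\ref{nsf0})--(\ref{nsf4}), Proposition~\ref{curlgrad} produces a local function $\psi$ with $S_{ab}=q\,\psi_{;a}\psi_{;b}$, the $+$ sign being forced by~(\ref{nsf4}) together with $q>0$. Tracing this relation gives $q\,g^{ab}\psi_{;a}\psi_{;b}=0$, so the field is automatically null --- no separate nullness hypothesis on $g$ is needed. Reconstructing $G_{ab}=S_{ab}+\tfrac1n Gg_{ab}$ and inserting $G=-n\Lambda$ from~(\ref{nsf0}) gives $G_{ab}=q\,\psi_{;a}\psi_{;b}-\Lambda g_{ab}$, which --- the field being null --- is precisely~(\ref{esf}) with $V=0$. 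To get the scalar equation, contract the Bianchi identity $\nabla^aG_{ab}=0$ with this $G_{ab}$; the term $\psi^{;a}\psi_{;ab}=\half\nabla_b(\psi^{;a}\psi_{;a})$ vanishes by nullness, leaving $(g^{ac}\psi_{;ac})\,\psi_{;b}=0$, hence $g^{ab}\psi_{;ab}=0$, which is~(\ref{kg}) with $V=0$.

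I expect the one delicate point to be this last step: passing from $(g^{ab}\psi_{;ab})\,\psi_{;c}=0$ to $g^{ab}\psi_{;ab}=0$ entails dividing by a gradient that may vanish on a closed set; the gap is closed exactly as in Theorem~\ref{nonullfreesf}, namely the conclusion holds on the open set where $\psi_{;c}\neq0$, holds trivially on the interior of its complement (where $\psi$ is locally constant), and extends everywhere by continuity. A related caveat worth recording is that, as in the non-null theorem, ``null solution'' tacitly means $\psi$ non-constant, consistent with the strict inequality in~(\ref{nsf4}) excluding $S_{ab}=0$. The remaining manipulations are routine trace bookkeeping in $n$ dimensions.
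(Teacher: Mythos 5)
Your proposal is correct and follows essentially the same route as the paper: decompose the Einstein equations into trace and trace-free parts, apply Proposition~\ref{curlgrad} to $S_{ab}$, observe that tracelessness of $S_{ab}$ automatically enforces nullness, and recover the wave equation from the contracted Bianchi identity. Your extra care about dividing by $\psi_{;b}$ in the last step is a reasonable elaboration (the paper handles the analogous issue explicitly only in Theorem~\ref{nsfthm}, where $S_{ab}=0$ is allowed; here condition~(\ref{nsf4}) already rules that out pointwise), but it does not change the argument.
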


\begin{Corollary}
Let $(M, g)$ satisfy the conditions of Theorem \ref{nullfreesf}.  Then $(g, \psi)$ satisfy the Einstein-scalar field equations with $V = 0$,  and $\psi$ is determined up to an additive constant and up to a sign by
\begin{equation}
\psi_{;a}\psi_{;b} = \frac{1}{q}S_{ab}.
\end{equation}
\end{Corollary}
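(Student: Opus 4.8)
The plan is to make explicit the reconstruction already carried out in the sufficiency half of the proof of Theorem \ref{nullfreesf}. First I would apply Proposition \ref{curlgrad} to the symmetric tensor field $Q_{ab} = \frac{1}{q}S_{ab}$. Since $q$ is a fixed nonzero constant, the two hypotheses of that proposition for $Q_{ab}$ are equivalent to conditions (\ref{nsf1}) and (\ref{nsf2}), so there is locally a function $\psi$ with $\frac{1}{q}S_{ab} = \pm\,\psi_{;a}\psi_{;b}$. Because $q>0$, the positivity hypothesis (\ref{nsf4}) selects the upper sign, which gives the asserted formula $\psi_{;a}\psi_{;b} = \frac{1}{q}S_{ab}$.

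Next I would verify that $(g,\psi)$ solves the Einstein-scalar field system. Tracing the reconstruction formula and using that $S_{ab}$ is trace-free gives $g^{ab}\psi_{;a}\psi_{;b} = 0$, so the candidate solution is null and the kinetic-trace term in (\ref{esf}) drops out. Reassembling the Einstein tensor from its trace-free part, $G_{ab} = S_{ab} + \frac{1}{n}G g_{ab}$, and using (\ref{nsf0}) together with $G = \frac{2-n}{2}R$ to evaluate $G = -n\Lambda$, one finds $G_{ab} = q\,\psi_{;a}\psi_{;b} - \Lambda g_{ab}$, which is exactly (\ref{esf}) with $V=0$ in the null case, so $\Lambda$ plays the role of the cosmological constant. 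Applying the contracted Bianchi identity to this equation and using $\psi^{;a}\psi_{;ab} = \frac{1}{2}\nabla_b(\psi^{;a}\psi_{;a}) = 0$ leaves $(g^{ab}\psi_{;ab})\,\psi_{;c} = 0$, whence $g^{ab}\psi_{;ab} = 0$ on the open set where $d\psi\neq 0$ and trivially on the interior of its complement, where $\psi$ is locally constant; this is the field equation (\ref{kg}) with $V=0$. For the uniqueness clause I would note that $\psi_{;a}\psi_{;b} = \frac{1}{q}S_{ab}$ fixes $d\psi$ pointwise up to sign, that continuity of $d\psi$ makes this sign locally constant on each connected component of $\{S_{ab}\neq 0\}$, and that $\psi$ is an arbitrary constant wherever $S_{ab}\equiv 0$; together these give the stated determination of $\psi$ up to an additive constant and an overall sign.

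I do not anticipate a real obstacle: the statement is a corollary of machinery already in place, namely Proposition \ref{curlgrad} and the computations in the proof of Theorem \ref{nullfreesf}. The only points that call for a little care are the passage from $(g^{ab}\psi_{;ab})\,\psi_{;c} = 0$ to the unrestricted field equation $g^{ab}\psi_{;ab} = 0$, and making precise the sense of the uniqueness statement on the locus where $S_{ab}$ degenerates.
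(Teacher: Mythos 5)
Your proposal is correct and follows essentially the same route as the paper, which obtains this corollary directly from the sufficiency half of the proof of Theorem \ref{nullfreesf}: Proposition \ref{curlgrad} plus condition (\ref{nsf4}) give $S_{ab}=q\psi_{;a}\psi_{;b}$, tracing shows the field is null, reassembling $G_{ab}$ with (\ref{nsf0}) yields the Einstein equations, and the contracted Bianchi identity yields the wave equation. Your extra care about passing from $(g^{ab}\psi_{;ab})\psi_{;c}=0$ to $g^{ab}\psi_{;ab}=0$ on the locus where $d\psi$ vanishes is a legitimate refinement of a point the paper leaves implicit.
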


We now turn to geometrization conditions which describe the case with $V(\psi) \neq0$.  We define $\V = {q}V(\psi) + \Lambda$.  We always assume that $V(\psi)$ has been specified such that there exists an inverse function $W\colon {\bf R}\to {\bf R}$ with $W(\V(\psi)) = \psi$, and $\V(W(x)) = x$. 
  
\begin{Theorem}
Let $(M, g)$ be an $n$-dimensional spacetime, $n>2$. The following are  necessary and sufficient conditions on $g$ such that there exists a scalar field $\psi$  with $(g, \psi)$ defining a non-null solution to the Einstein-scalar equations (\ref{esf}), (\ref{kg}), where $\V = {q} V + \Lambda$ has inverse $W$:
\begin{equation}
 {}_2G - \frac{1}{n}G^2 \neq 0,
 \label{sfa5}
 \end{equation}
 \begin{equation}
H_{ab} = qW^{\prime2}(A)A_{;a} A_{;b},
 \label{sfa0}
 \end{equation}
where we define
\begin{equation}
	A = \frac{1}{2}\frac{\frac{1}{n} G\; _2G - \,_3G}{_2G - \frac{1}{n}G^2}.
\end{equation}
and 
\begin{equation}
	H_{ab} = G_{ab} + \frac{1}{2-n} \left(G + 2A \right) g_{ab}.
\label{HVdef}
\end{equation}
\label{sfV}
\end{Theorem}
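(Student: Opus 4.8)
The plan is to follow the template of the proof of Theorem~\ref{nonullfreesf}, splitting into necessity and sufficiency, while exploiting a feature peculiar to the case $V\neq0$: the scalar field will be recovered \emph{functionally} from the curvature scalar $A$ by setting $\psi=W(A)$. Because $A$ is built algebraically from the metric and its curvature, the tensor $W^{\prime2}(A)A_{;a}A_{;b}$ is automatically the symmetric square of a gradient, so, unlike in the free massless case, no integrability conditions of the kind in Proposition~\ref{curlgrad} are needed: the single equation (\ref{sfa0}) will carry all the weight that (\ref{sf1}), (\ref{sf2}) and (\ref{sf4}) carried in Theorem~\ref{nonullfreesf}. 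I will also note in passing that non-nullity is not an extra hypothesis but a consequence of (\ref{sfa5}).

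For necessity I would begin from a non-null solution, set $\chi=g^{lm}\psi_{;l}\psi_{;m}\neq0$ and $\sigma=\tfrac12 q\chi+\V$, so that (\ref{esf}) reads $G_{ab}=q\psi_{;a}\psi_{;b}-\sigma g_{ab}$ with trace-free part $S_{ab}=q\big(\psi_{;a}\psi_{;b}-\tfrac1n\chi g_{ab}\big)$. Squaring the Einstein tensor and using $\psi^{;c}\psi_{;c}\,\psi_{;a}\psi_{;b}=\chi\,\psi_{;a}\psi_{;b}$ gives
\begin{equation}
{}_2G_{ab}-\frac1n\,{}_2G\,g_{ab}=(q\chi-2\sigma)\,S_{ab}=-2\V\,S_{ab}.
\end{equation}
Contracting this with $G^{ab}$ and dividing by ${}_2G-\tfrac1n G^2=S_{ab}S^{ab}=\tfrac{n-1}{n}q^2\chi^2$, which is nonzero and is exactly (\ref{sfa5}), solves for $\V=A$. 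Then $\psi=W(A)$ and $\psi_{;a}=W'(A)A_{;a}$, and eliminating $\sigma$ in favour of $G$ and $A$ through the trace of (\ref{esf}) identifies $H_{ab}=q\psi_{;a}\psi_{;b}=qW^{\prime2}(A)A_{;a}A_{;b}$, which is (\ref{sfa0}).

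For sufficiency I would put $\psi:=W(A)$, so that (\ref{sfa0}) reads $\psi_{;a}\psi_{;b}=\tfrac1q H_{ab}$. Taking the trace of the definition (\ref{HVdef}) of $H_{ab}$ pins down $q\,g^{ab}\psi_{;a}\psi_{;b}=\tfrac{2(G+nA)}{2-n}$; substituting this back reconstructs
\begin{equation}
G_{ab}=q\psi_{;a}\psi_{;b}-\Big(\tfrac12 q\,g^{mn}\psi_{;m}\psi_{;n}+A\Big)g_{ab},
\end{equation}
which is (\ref{esf}) with cosmological constant $\Lambda$ and potential $V=\tfrac1q(\V-\Lambda)$, since $\V(W(A))=A$ forces $A=qV(\psi)+\Lambda$. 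Feeding this $G_{ab}$ into the contracted Bianchi identity, the $\psi^{;c}\psi_{;cb}$ terms cancel and one is left with $q\,(g^{ac}\psi_{;ac})\,\psi_{;b}=A_{;b}=qV'(\psi)\psi_{;b}$; and (\ref{sfa5}) forces $\big(g^{lm}\psi_{;l}\psi_{;m}\big)^2=\tfrac{n}{(n-1)q^2}\big({}_2G-\tfrac1n G^2\big)\neq0$, hence $\psi_{;b}\neq0$, so the scalar field equation (\ref{kg}) holds and the solution is non-null.

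I expect the bulk of the work to be routine bookkeeping: verifying the squared-Einstein-tensor identity above, and checking that the various trace relations among $G$, $A$, $g^{ab}\psi_{;a}\psi_{;b}$ and the trace of $H_{ab}$ are mutually consistent so that $H_{ab}$ genuinely reconstructs $G_{ab}$. I do not anticipate a substantive obstacle; the one point worth flagging is conceptual rather than computational: the functional relation $\psi=W(A)$ makes the gradient structure of $H_{ab}$ automatic, which is precisely why Theorem~\ref{sfV} needs no analogue of the closure condition (\ref{sf2}) and why (\ref{sfa0}) together with (\ref{sfa5}) already suffice.
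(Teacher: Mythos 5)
Your proposal is correct and follows essentially the same route as the paper: derive $\V(\psi)=A$ by contracting the squared trace-free Einstein tensor identity, set $\psi=W(A)$ so that $H_{ab}=q\psi_{;a}\psi_{;b}$ becomes condition (\ref{sfa0}), and recover the Klein--Gordon equation from the contracted Bianchi identity together with the non-nullity guaranteed by (\ref{sfa5}). The paper states this tersely by reference to the proof of Theorem \ref{nonullfreesf}; your write-up merely supplies the intermediate algebra (all of which checks out), and your closing observation---that the functional relation $\psi=W(A)$ renders the closure conditions of Proposition \ref{curlgrad} unnecessary here---is exactly the right explanation for the simpler form of this theorem.
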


\begin{proof}
The proof is along the same lines as the proof of Theorem \ref{nonullfreesf}.   To see that the conditions  (\ref{sfa5}), (\ref{sfa0}) are necessary, we start from the Einstein equations (\ref{esf}), from which it follows  that the scalar field is non-null only if
\begin{equation}
{}_2G - \frac{1}{n}G^2 \neq 0.
\end{equation}
It also follows from (\ref{esf}) that
 \begin{equation}
 A\equiv \frac{1}{2}\frac{\frac{1}{n} G\; _2G - {}_3G}{{}_2G - \frac{1}{n}G^2} = \V(\psi),
 \label{AVdef}
 \end{equation}
 so that
 \begin{equation}
\psi = W(A),
 \end{equation} 
 and
 \begin{equation}
 H_{ab} = q\psi_{;a} \psi_{;b} = q W^{\prime2}(A)A_{;a} A_{;b}.
 \end{equation}
 Conversely, assuming the metric satisfies  conditions (\ref{sfa5}) and (\ref{sfa0}), if we set 
\begin{equation}
\psi \equiv W(A),
\end{equation}
so that $A = \V(\psi)$, then from (\ref{sfa5}), (\ref{sfa0})  the scalar field is not null and the Einstein equations are satisfied.   The contracted Bianchi identity then implies the scalar field equation (\ref{kg}) is satisfied as before. 
\end{proof}

\begin{Corollary}
If a metric $g$ satisfies the conditions of Theorem \ref{sfV}, then there exists a non-null solution $(g, \psi)$ to the Einstein-scalar field equations (\ref{esf}), (\ref{kg}) where
\begin{equation}
\psi = \V^{-1}(A).
\end{equation}
\label{CV}
\end{Corollary}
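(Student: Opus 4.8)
The plan is to simply read off the conclusion from the proof of Theorem \ref{sfV}, since the corollary asserts nothing more than what was already extracted there. First I would invoke Theorem \ref{sfV}: given that $g$ satisfies conditions (\ref{sfa5}) and (\ref{sfa0}), the theorem guarantees that the scalar $\psi \equiv W(A)$, with $A$ defined as in (\ref{AVdef}), yields a non-null pair $(g,\psi)$ solving the Einstein-scalar field equations (\ref{esf}) and the wave equation (\ref{kg}). So the existence statement is immediate, and the only content left is to identify the function $W$ appearing in the theorem with the notation $\V^{-1}$ used in the corollary.

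That identification is purely a matter of unwinding definitions. Recall $\V = qV + \Lambda$ was assumed invertible with inverse $W$, i.e. $W = \V^{-1}$ as functions $\mathbf{R}\to\mathbf{R}$. From the Einstein equations the trace combination forces $A = \V(\psi)$ — this is exactly equation (\ref{AVdef}) in the proof of Theorem \ref{sfV} — and applying $W = \V^{-1}$ to both sides gives $\psi = W(A) = \V^{-1}(A)$. Hence the formula $\psi = \V^{-1}(A)$ in the corollary is literally the reconstruction formula $\psi = W(A)$ from the theorem rewritten with $W$ spelled out as $\V^{-1}$. I would state this in one or two lines.

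There is essentially no obstacle here; the corollary is a restatement-for-convenience of a formula already displayed inside the proof of the theorem. If anything, the only point deserving a word of care is that $\psi = \V^{-1}(A)$ is determined only up to the freedoms already noted for the free-field corollaries (an additive constant and a sign in $\psi$ coming from integrating $H_{ab} = q\psi_{;a}\psi_{;b}$); in the present $V\neq 0$ case the relation $A = \V(\psi)$ in fact pins down $\psi$ itself once $\V$ is injective, so no residual sign or constant ambiguity remains, which is consistent with the clean statement $\psi = \V^{-1}(A)$. I would make this brief remark and conclude.

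\begin{proof}
By Theorem \ref{sfV}, the hypotheses (\ref{sfa5}) and (\ref{sfa0}) on $g$ guarantee that, setting $\psi \equiv W(A)$ with $A$ as in (\ref{AVdef}), the pair $(g,\psi)$ is a non-null solution of the Einstein-scalar field equations (\ref{esf}), (\ref{kg}). Since by hypothesis $W = \V^{-1}$, this reconstruction formula reads $\psi = \V^{-1}(A)$. (Equivalently, the trace identity $A = \V(\psi)$ established in the proof of Theorem \ref{sfV} determines $\psi$ uniquely once $\V$ is invertible, with no residual sign or additive-constant ambiguity.) This is the claimed formula.
\end{proof}
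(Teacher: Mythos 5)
Your proposal is correct and matches the paper's (implicit) argument: the corollary is simply the reconstruction formula $\psi = W(A)$ extracted from the proof of Theorem \ref{sfV}, with $W$ rewritten as $\V^{-1}$. Your added remark that invertibility of $\V$ removes the sign and additive-constant ambiguities present in the free-field case is a sensible clarification but does not change the substance.
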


Finally we consider the null case with a given self-interaction potential, invertible as before.
\begin{Theorem}
Let $(M, g)$ be an $n$-dimensional spacetime, $n>2$. There exists a scalar field $\psi$  with $(g, \psi)$ defining a local, null solution to the Einstein-scalar equations (\ref{esf}), (\ref{kg}) (with potential described by $\V = qV + \Lambda$ and  $W = \V^{-1}$) if and only if either 
\begin{equation}
S_{ab} = \q W^{\prime 2}(B) B_{;a} B_{;b} \neq 0,
\label{sfnullV}
\end{equation}
or
\begin{equation}
S_{ab} = 0,\quad B_{;a}=0,\quad V^\prime(W(B)) = 0,
\label{sfnullV2}
\end{equation}
where
\begin{equation}
B =\frac{n-2}{2n}R.
\end{equation}

\label{nsfthm}
\end{Theorem}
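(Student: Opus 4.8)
The plan is to follow the template of Theorems \ref{nullfreesf} and \ref{sfV}, the essential new feature being that, because the scalar is null, the curvature invariant $A$ of the non-null theorems reduces to the rescaled Ricci scalar $B=\tfrac{n-2}{2n}R$, and on shell $B=\V(\psi)$; hence $\psi$ can be recovered \emph{algebraically} as $\psi=W(B)$ rather than by an integration. Consequently Proposition \ref{curlgrad} is not needed here, and the only ``closure'' hypothesis is the bare equation (\ref{sfnullV}). Note also that (\ref{sfnullV}) already builds in the quadratic and sign structure ($W^{\prime2}\ge 0$) that in Theorem \ref{nullfreesf} had to be imposed separately, so there is no positivity condition to check.

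For necessity, suppose $(g,\psi)$ is a null solution, so $g^{lm}\psi_{;l}\psi_{;m}=0$. Then (\ref{esf}) reads $G_{ab}=q\psi_{;a}\psi_{;b}-\V g_{ab}$ with $\V=qV(\psi)+\Lambda$. Tracing gives $G=\tfrac{2-n}{2}R=-n\V$, hence $\V(\psi)=\tfrac{n-2}{2n}R=B$, so $\psi=W(B)$; the trace-free part is $S_{ab}=q\psi_{;a}\psi_{;b}$, and substituting $\psi_{;a}=W^\prime(B)B_{;a}$ yields $S_{ab}=qW^{\prime2}(B)B_{;a}B_{;b}$. If this tensor is nonzero we are in case (\ref{sfnullV}). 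If instead $S_{ab}=0$, then $q\psi_{;a}\psi_{;b}=0$ forces $\psi_{;a}=0$, so $\psi$ and hence $B$ are locally constant, $B_{;a}=0$, and the field equation (\ref{kg}) degenerates to $V^\prime(\psi)=V^\prime(W(B))=0$; this is case (\ref{sfnullV2}).

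For sufficiency in the nondegenerate case (\ref{sfnullV}), set $\psi:=W(B)$, so $B=\V(\psi)$ and $\psi_{;a}=W^\prime(B)B_{;a}$; then $q\psi_{;a}\psi_{;b}=qW^{\prime2}(B)B_{;a}B_{;b}=S_{ab}$ by hypothesis, and tracing shows $g^{lm}\psi_{;l}\psi_{;m}=\tfrac1q S^a{}_a=0$, so the candidate field is genuinely null. Reassembling $G_{ab}=S_{ab}-Bg_{ab}$ using $G=-nB$ and $B=\V(\psi)$ gives exactly (\ref{esf}); the contracted Bianchi identity then produces $0=\nabla^aG_{ab}=q\big(g^{cd}\psi_{;cd}-V^\prime(\psi)\big)\psi_{;b}+q\psi^{;a}\psi_{;ab}$, and $\psi^{;a}\psi_{;ab}=\tfrac12\nabla_b(\psi^{;a}\psi_{;a})=0$ by nullity, so since $S_{ab}\neq 0$ forces $\psi_{;b}\neq 0$ at each point we recover (\ref{kg}). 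In the degenerate case (\ref{sfnullV2}) one sets $\psi:=W(B)$, which is constant because $B_{;a}=0$; then $\psi_{;a}=0$, $G_{ab}=-Bg_{ab}=-\V(\psi)g_{ab}$ reproduces (\ref{esf}), and (\ref{kg}) reduces to $V^\prime(W(B))=0$, which holds by assumption.

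The only delicate point is the case analysis itself: one must verify that the dichotomy ``$S_{ab}\neq 0$'' versus ``$S_{ab}=0$'' exactly matches the two stated alternatives, that in the nondegenerate branch $S_{ab}\neq 0$ is precisely what licenses dividing the Bianchi identity by $\psi_{;b}$ to extract the Klein--Gordon equation, and that in the degenerate branch $B$ constant (equivalently $R$ constant) together with $V^\prime(W(B))=0$ is genuinely sufficient. Everything else is a short computation parallel to the earlier proofs, using only $q\neq 0$, $n>2$, and $\V\circ W=\mathrm{id}$.
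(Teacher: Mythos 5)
Your proof is correct and follows essentially the same route as the paper's: recover $\psi=W(B)$ algebraically from the trace of the Einstein equations, read off $S_{ab}=q\psi_{;a}\psi_{;b}$ from the trace-free part, and use the contracted Bianchi identity (dividing by $\psi_{;b}\neq 0$ when $S_{ab}\neq 0$, or invoking $V^\prime(W(B))=0$ in the degenerate case) to obtain the scalar field equation. Your version simply spells out the trace computations and the dichotomy in more detail than the paper does.
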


\begin{proof}
To see that this condition is necessary, assume the Einstein-scalar field equations hold for a metric $g$ and a null scalar field $\psi$.  It follows that
\begin{equation}
\V = \frac{n-2}{2n} R,\quad S_{ab} = q \psi_{;a}\psi_{;b},
\end{equation}
so that $\psi = W(B)$ and condition (\ref{sfnullV}) or condition (\ref{sfnullV2}) follows, depending upon whether $\psi_{;a}$ vanishes or not.  Conversely, defining $\psi = W(B)$, it follows that $R = \frac{2n}{n-2} \V(\psi)$ and, using (\ref{sfnullV}) if $S_{ab}\neq 0$, the Einstein equations (\ref{esf}) are satisfied and the scalar field is null.   The contracted Bianchi identity implies
\begin{equation}
\psi_{;b}\left[\psi_{;a}{}^a - V^\prime(\psi)\right] = 0,
\end{equation}
which implies (\ref{kg}) if $S_{ab}\neq0$ since $\psi_{;b}\neq 0$ by (\ref{sfnullV}). If $S_{ab}=0$, and $B_{;a} = 0$, the scalar field equation follows from $V^\prime(W(B)) = 0$.
\end{proof}

\begin{Corollary}
If a metric $g$ satisfies the conditions of Theorem \ref{nsfthm}, then there exists a null solution $(g, \psi)$ to the Einstein-scalar field equations (\ref{esf}), (\ref{kg}) where
\begin{equation}
\psi = \tilde V^{-1}(B).
\end{equation}
\end{Corollary}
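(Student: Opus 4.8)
The plan is to obtain this existence statement directly from the sufficiency half of Theorem \ref{nsfthm}, whose construction already produces the scalar field in closed form; the corollary simply records that construction. Because the hypothesis is exactly that $g$ satisfies the conditions of Theorem \ref{nsfthm}, the biconditional in that theorem already guarantees that a local null solution $(g,\psi)$ exists. The only remaining task is to identify the $\psi$ delivered by the theorem with the expression $\psi = \V^{-1}(B)$.

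First I would recall that the inverse function was defined by $W = \V^{-1}$, so that the assignment $\psi = W(B)$ made in the proof of Theorem \ref{nsfthm} and the formula $\psi = \V^{-1}(B)$ of the corollary are literally identical. It therefore suffices to verify, in each of the two admissible cases, that $\psi := W(B)$ with $B = \frac{n-2}{2n}R$ is indeed the solution. In both cases the relation $\V(\psi) = \V(W(B)) = B = \frac{n-2}{2n}R$, i.e. $R = \frac{2n}{n-2}\V(\psi)$, holds by the defining property $\V\circ W = \mathrm{id}$, and this is precisely the trace part of the Einstein equation (\ref{esf}).

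In the generic branch, where $S_{ab} = \q W^{\prime2}(B)B_{;a}B_{;b}\neq 0$, the chain rule gives $\psi_{;a} = W^\prime(B)B_{;a}$, whence $\q\,\psi_{;a}\psi_{;b} = \q W^{\prime2}(B)B_{;a}B_{;b} = S_{ab}$. Reassembling the Einstein tensor from its trace and trace-free parts then reproduces the full null Einstein-scalar field equation, and the contracted Bianchi identity, together with $\psi_{;a}\neq 0$, enforces the field equation (\ref{kg}), exactly as in the proof of Theorem \ref{nsfthm}.

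The step I expect to require the most explicit care is the degenerate branch $S_{ab}=0$, $B_{;a}=0$, $V^\prime(W(B))=0$. Here $B$ is constant, so $\psi = W(B)$ is a constant field with $\psi_{;a}=0$; it is trivially null, and its stress-energy collapses to a pure cosmological-constant term, so the trace relation above is the entire Einstein content. The scalar equation (\ref{kg}) then reduces to the purely algebraic demand $V^\prime(\psi)=0$, which is supplied by the hypothesis $V^\prime(W(B))=0$. Confirming that the single formula $\psi = \V^{-1}(B)$ simultaneously covers both this constant-field solution and the generic gradient solution is the one point worth stating explicitly; everything else is a transcription of the sufficiency argument already given for Theorem \ref{nsfthm}.
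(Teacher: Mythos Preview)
Your proposal is correct and follows exactly the approach implicit in the paper: the corollary is an immediate byproduct of the sufficiency construction in Theorem \ref{nsfthm}, where the scalar field is defined as $\psi = W(B) = \V^{-1}(B)$, and you have simply unpacked that construction case by case. The paper offers no separate proof of the corollary, so your elaboration of the two branches is a faithful (and slightly more explicit) rendering of what the theorem's proof already establishes.
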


\subsection{Example: A non-inheriting scalar field solution}

The static, spherically symmetric fluid spacetime (\ref{fluid1}), (\ref{fluid1b}) also satisfies the geometrization conditions for a massless free scalar field contained in Theorem \ref{nonullfreesf}.  Starting with the metric
\begin{equation}
g = -r^2 dt \otimes dt +\frac{2}{1 + \lambda r^2} dr \otimes dr + r^2(d\theta \otimes d\theta + \sin^2\theta d\phi \otimes d\phi).
\label{fluid1aa}
\end{equation}
and calculating $A$ and $H$ from (\ref{A2def})  and  (\ref{H2def})  gives
\begin{equation}
 A = -\frac{3}{2} \lambda, \quad H_{ab} dx^a \otimes dx^b = dt \otimes dt,
\end{equation}
so that, according to Corollary \ref{nonnullsfcor}, the scalar field is given by 
\begin{equation}
\psi = \pm\, \frac{1}{\sqrt{q}}t + constant,
\end{equation} 
and the cosmological constant is given by $\Lambda = - \frac{3}{2} \lambda$.   This solution (with $\lambda = 0$) was exhibited in ref.~\cite{Wyman}.  We remark that while the spacetime is static the scalar field is clearly not static and so represents an  example of a ``non-inheriting'' solution to the Einstein-scalar field equations.  Non-inheriting solutions of the Einstein-Maxwell equations are well-known \cite{Stephani}.  Geometrization conditions, which depend solely upon the metric, treat inheriting and non-inheriting matter fields on the same footing.  

We have been able to find an analogous family of non-inheriting solutions in 2+1 dimensions from an analysis of the geometrization conditions in Theorem \ref{nonullfreesf}. In coordinates $(t, r, \theta)$ the spacetime metric takes the form:
\begin{equation}
g = - \frac{1}{2\Lambda} dt \otimes dt + \frac{1}{b - 2\Lambda r^2} dr \otimes dr + r^2 d\theta,
\end{equation}
where $b$ is a constant.  This metric yields $A = \Lambda$ and
\begin{equation}
H =  dt \otimes dt,
\end{equation}
so that from Corollary \ref{nonnullsfcor} the scalar field is given by
\begin{equation}
\psi = \pm \frac{1}{\sqrt{q}} t + constant.
\end{equation}
It is straightforward to verify that the metric and scalar field so-defined satisfy the Einstein-scalar field equations (\ref{esf}) and  (\ref{kg}) with $V=0$. 

\subsection{Example: No-go results for spherically symmetric null scalar field solutions}

We use Theorem \ref{nullfreesf} to show that there are no null solutions to the free, massless Einstein-scalar field equations if the spacetime is static and spherically symmetric, provided the spherical symmetry orbits are not null.  We also show that there are no spherically symmetric null solutions with null spherical symmetry orbits. Both results hold with or without a cosmological constant.  Since these results follow directly from  the geometrization conditions they apply whether or not the scalar field inherits the spacetime symmetries.

We first consider a static, spherically symmetric spacetime in which the spherical symmetry orbits are not null. We use coordinates chosen such that the metric takes the form:
\begin{equation}
g = - f(r) dt \otimes dt + h(r) dr \otimes dr + R^2(r) (d\theta \otimes d\theta + \sin^2\theta d\phi \otimes d\phi),
\label{nonnullmetric}
\end{equation}
for some non-zero functions $f, h, R$.  The condition (\ref{nsf1}) applied to (\ref{nonnullmetric})  yields:
\begin{equation}
R^{\prime\prime} = \frac{1}{2} \left(\frac{1}{h} h^\prime +  \frac{1}{f} f^\prime\right)R^\prime  ,
\end{equation}
\begin{equation}
f^{\prime\prime} = 2 f\left( \frac{R^\prime}{R}\right)^2 + \frac{1}{2h} h^\prime f^\prime + \frac{1}{2f} f^{\prime2} - 2\frac{fh}{R^2}.
\end{equation}
These conditions force the trace-free Ricci tensor to vanish, whence the scalar field vanishes and we have an Einstein space.  Consequently there are no non-trivial null solutions to the Einstein-scalar field equations in which the spacetime is static and spherically symmetric with non-null spherical symmetry orbits.

Next we consider a spherically symmetric spacetime in which the spherical symmetry orbits are null. In this case there exist coordinates $(v, r, \theta, \phi)$ such that the metric takes the form:
\begin{equation}
g = w(v, r) (dv \otimes dr + dr \otimes dv) + u(v,r) dr \otimes dr +  r^2 (d\theta \otimes d\theta + \sin^2\theta d\phi \otimes d\phi),
\label{nullmetric}
\end{equation}
for some functions $w\neq0$ and $u$.  
Calculation of  conditions (\ref{nsf0}), (\ref{nsf1}) for metrics (\ref{nullmetric}) reveals they are are incompatible.  Consequently there are no null solutions to the Einstein-scalar field equations in this case.  Since (\ref{nullmetric}) is not actually static, but merely spherically symmetric, this proves that there are no Einstein-free-scalar field null solutions for spacetimes which are spherically symmetric with null symmetry orbits. 

\subsection{Self-interacting scalar fields}

Fonarev \cite{Fonarev} has found a 1-parameter family of non-null spherically symmetric solutions to the Einstein-scalar field equations with a potential energy function which is an exponential function of the scalar field.  Here we verify these solutions directly from the metric using Theorem \ref{sfV}.

In coordinates $(t, r, \theta, \phi)$ and with $q=1$ the metric in reference \cite{Fonarev} takes the form:
\begin{align}
g = -&e^{8\alpha^2 \beta t}(1-\frac{2m}{r})^\delta dt \otimes dt
+e^{2\beta t}(1-\frac{2m}{r})^{-\delta} dr \otimes dr\nonumber \\
+& e^{2\beta t}(1 - \frac{2m}{r})^{1-\delta} r^2(d\theta \otimes d\theta + \sin^2\theta d\phi \otimes d\phi),
\label{Fonarev}
\end{align}
where $m > 0$ is a free parameter, 
\begin{equation}
\delta = \frac{2\alpha}{\sqrt{4\alpha^2 + 1}},`
\end{equation}
and $\alpha$ and $\beta$ parametrize the scalar field potential and cosmological constant via
\begin{equation}
\tilde V(\psi) =  \beta^2(3-4\alpha^2)\exp\left(-\sqrt{8}\alpha\psi\right).
\label{FV}
\end{equation}
The inverse of the potential function is given by
\begin{equation}
W(x) = - \frac{\sqrt{2}}{4\alpha} \ln\left(\frac{x}{\beta^2(3 - 4\alpha^2)}\right).
\end{equation}

Using the metric (\ref{Fonarev}) to calculate $A$ in (\ref{AVdef}) gives
\begin{equation}
A = (3-4\alpha^2) \beta^2e^{-8\alpha^2 \beta t}\left(1 - \frac{2m}{r}\right)^{\frac{2\alpha}{\sqrt{4\alpha^2 + 1}}}.
\end{equation}
Calculating the tensor $H$ in (\ref{HVdef}) yields
\begin{align}
H =& 8\beta^2\alpha^2 dt \otimes dt + \frac{4\alpha m \beta}{\sqrt{4\alpha^2 + 1}r(r - 2m)} (dt \otimes dr + dr \otimes dt)\nonumber\\ 
&+ \frac{2m^2}{(4\alpha^2 + 1)r^2(r - 2m)^2} dr \otimes dr,
\end{align}
and it follows that (\ref{sfa0}) is satisfied. Therefore, from Theorem \ref{sfV}, the metric (\ref{Fonarev}) does indeed define a scalar field solution with the potential (\ref{FV}).  Using Corollary \ref{CV}, the scalar field is calculated to be
\begin{equation}
\psi = \sqrt{2}\left\{2 \alpha \beta t + \frac{1}{2\sqrt{4\alpha^2 + 1}}\ln\left(1- \frac{2m}{r}\right)\right\},
\end{equation}
in agreement with Fonarev \cite{Fonarev}.

\section{Electromagnetic Fields}

Necessary and sufficient conditions for a metric  to be a non-null electrovacuum were first given by Rainich \cite{Rainich} and enhanced by Misner and Wheeler \cite{MW}.   Necessary and sufficient conditions for a metric to be a null electrovacuum have been obtained in \cite{Torre2014}.  In each case procedures for constructing the electromagnetic field from the metric have also been obtained. These results apply to the Einstein-Maxwell equations in four spacetime dimensions with no electromagnetic sources and with no cosmological constant.   Here we summarize all these results while generalizing them to include a cosmological constant. 

The Einstein-Maxwell equations for a spacetime $(M, g)$ with electromagnetic field $F_{ab} = F_{[ab]}$ and cosmological constant $\Lambda$ are given by 

\begin{equation}
R_{ab} - \half R g_{ab}  + \Lambda g_{ab} = \q \left(F_{ac}F_b{}^c - \frac{1}{4} F_{cd}F^{cd} g_{ab}\right),
\label{EE}
\end{equation}

\begin{equation}
\nabla^b F_{ab} = 0,\quad  \nabla_{[a}F_{bc]} = 0.
\label{ME}
\end{equation}
Here $\q=8\pi \G/c^4$, $\nabla$ is the torsion-free derivative determined by the metric, $R_{ab}$ and $R$ are the Ricci tensor and Ricci scalar of the metric.  We shall refer to (\ref{EE}) alone as the Einstein equations and we shall refer to (\ref{ME}) alone as the Maxwell equations.    

If the 2 scalar invariants of the electromagnetic field vanish in some region,
\begin{equation}
F_{ab} F^{ab} = 0 = \epsilon^{abcd} F_{ab} F_{cd},
\end{equation}
we say that the electromagnetic field is {\it null} in that region.  Otherwise the electromagnetic field is {\it non-null} in that region.  If $\zeta_{ab}$ is a 2-form, the {\it Hodge duality} operation is given by
\begin{equation}
{}^\star\zeta_{ab} = \frac{1}{2} \epsilon_{ab}{}^{cd}\zeta_{cd}.
\end{equation}

\begin{Proposition}{
Let $(M, g)$ be a 4-dimensional spacetime.  The following conditions on $g$ are necessary and sufficient for the existence of a 2-form $F$ such that the Einstein equations (\ref{EE}) are satisfied on $M$:
\begin{equation}
R =4\Lambda, \quad S_a^b S_b^c - \frac{1}{4} \delta_a^c S_{mn}S^{mn} = 0, \quad S_{ab}t^at^b >0,
\label{AlgRain}
\end{equation}
where $S_{ab} = R_{ab} - \frac{1}{4} R g_{ab}$ and $t^a$ is any timelike vector field. 
The 2-form $F_{ab}$ which exists when the conditions (\ref{AlgRain}) are satisfied by the metric is unique up to a local duality rotation:
\begin{equation}
F_{ab} \longrightarrow \cos(\phi) F_{ab} - \sin(\phi){}^\star F_{ab},
\label{Fgen}
\end{equation}
where $\phi\colon M \to {\bf R}$. 

}
\label{P1}
\end{Proposition}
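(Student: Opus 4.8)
The plan is to reduce the Einstein equations (\ref{EE}) to purely algebraic conditions on the trace-free Ricci tensor $S_{ab}$ and then recognize those as the classical Rainich algebraic characterization of a Maxwell stress tensor in four dimensions. For necessity, suppose a $2$-form $F$ satisfies (\ref{EE}). The Maxwell stress tensor $T_{ab}=F_{ac}F_b{}^c-\frac14 F_{cd}F^{cd}g_{ab}$ is trace-free, so the trace of (\ref{EE}) gives $-R+4\Lambda=0$, i.e. $R=4\Lambda$, while subtracting the trace reduces (\ref{EE}) to $S_{ab}=\q T_{ab}$. The second condition in (\ref{AlgRain}) then follows from the algebraic identity
\[
T_a{}^c T_c{}^b=\tfrac14\,(T_{mn}T^{mn})\,\delta_a^b,
\]
valid for the stress tensor of an arbitrary $2$-form in four dimensions; I would prove it by passing to the self-dual form $\mathcal F_{ab}=F_{ab}+i\,{}^\star F_{ab}$, for which $\mathcal F_{ac}\mathcal F_b{}^c\propto g_{ab}$ identically, and writing $T_{ab}$ through $\mathcal F_{ac}\bar{\mathcal F}_b{}^c$. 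The third condition is positivity of the electromagnetic energy density; it is needed because we are characterizing genuine fields, $F\neq 0$, and it in turn forces $S_{ab}\neq 0$.

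For sufficiency --- the substantive direction --- one is given a metric with $R=4\Lambda$ and $S_{ab}$ (automatically trace-free) obeying $S_a{}^cS_c{}^b=\frac14(S_{mn}S^{mn})\delta_a^b$ and $S_{ab}t^at^b>0$, and must reconstruct $F$ along the lines of Rainich and Misner--Wheeler. Put $\sigma^2=\frac14 S_{mn}S^{mn}$, so that $S_a{}^cS_c{}^b=\sigma^2\delta_a^b$; the Lorentzian signature together with the energy inequality excludes $\sigma^2<0$ (a ``rotational'' Segre type, incompatible with $S_{ab}t^at^b>0$). If $\sigma^2>0$ the field is non-null: $S/\sigma$ has eigenvalues $\pm1$ with the $(-1)$-eigenplane timelike and the $(+1)$-eigenplane spacelike, and $F$ is built from the area $2$-forms of these two planes with scale set by $\sigma$ and $\q$. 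If $\sigma^2=0$ with $S\neq 0$ the field is null: a symmetric tensor squaring to zero must take the form $S_{ab}=\Phi\,k_ak_b$ with $k$ null (consistent with Proposition \ref{prodvec}), with $\Phi>0$ by the energy condition, and one takes $F=k\wedge w$ with $w$ spacelike, $k\cdot w=0$, and $w\cdot w$ fixed by $S_{ab}$. In either case a short computation gives $\q T_{ab}[F]=S_{ab}$, and (\ref{EE}) follows because $R=4\Lambda$ turns $G_{ab}+\Lambda g_{ab}$ into exactly $S_{ab}$.

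For uniqueness, if $F$ and $F'$ both yield $\q T[F]=\q T[F']=S_{ab}$, I would compare their self-dual parts: equality of the stress tensors forces $\mathcal F'=e^{i\phi}\mathcal F$ for a real-valued function $\phi$ on $M$, which unpacks to the duality rotation (\ref{Fgen}). The principal obstacle is the sufficiency reconstruction: showing that the algebraic square condition plus the scalar energy inequality suffice to pin down the correct Segre type of $S_{ab}$ --- excluding the rotational type and fixing the causal character of the eigenplanes --- and then to extract a bona fide $2$-form whose stress tensor is $S_{ab}/\q$, treating the null and non-null cases uniformly and verifying that the leftover ambiguity in each is precisely a duality rotation.
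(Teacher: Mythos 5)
Your proposal is correct and follows essentially the same route as the paper: decompose (\ref{EE}) into its trace part ($R=4\Lambda$) and trace-free part ($S_{ab}=q\,T_{ab}$), then apply the classical algebraic Rainich/Misner--Wheeler characterization of tensors satisfying (\ref{AlgRain}). The only difference is that the paper simply cites Misner and Wheeler for the sufficiency and the duality-rotation uniqueness, whereas you sketch that algebra (exclusion of the rotational Segre type via the energy inequality, the non-null/null eigenplane reconstructions, and the self-dual comparison) explicitly.
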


\begin{proof}

Decompose the Einstein equations (\ref{EE})  into pure trace and trace-free parts:
\begin{equation} 
R = 4\Lambda, \quad S_{ab} = \q \left(F_{ac}F_b{}^c - \frac{1}{4} F_{cd}F^{cd} g_{ab}\right).
\label{redEQ}
\end{equation}
It is straightforward to check that these equations imply conditions (\ref{AlgRain}), so these conditions are necessary.  The algebraic results of \cite{MW}, applied to $S_{ab}$ (instead of $R_{ab})$ show that the conditions on $S_{ab}$ in (\ref{AlgRain}) are sufficient for the existence of 2-form $F$ satisfying the Einstein equations, with  $F$  determined by the metric up to a duality rotation (\ref{Fgen}).
\end{proof}


The conditions (\ref{AlgRain}) generalize the classical algebraic Rainich conditions; they apply equally well for non-null and null electromagnetic fields. In the null case they take a simpler form.

\begin{Corollary}{
Let $(M, g)$ be a 4-dimensional spacetime.  The following conditions on $g$ are necessary and sufficient for the existence of a null  2-form  $F$ such that the Einstein equations (\ref{EE}) are satisfied on $M$:
\begin{equation}
R =4 \Lambda, \quad S_a^b S_b^c  = 0, \quad S_{ab}t^at^b >0,
\label{AlgRain2}
\end{equation}
}
\label{C1}
\end{Corollary}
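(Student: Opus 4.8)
The plan is to reduce everything to Proposition~\ref{P1} by observing that, among the 2-forms $F$ produced there, the null ones are precisely those for which the trace-free Einstein tensor satisfies the stronger algebraic relation $S_a^b S_b^c = 0$ in place of the weaker algebraic Rainich condition $S_a^b S_b^c - \frac14 \delta_a^c S_{mn}S^{mn} = 0$. The crux is therefore a single algebraic fact: a Maxwell stress tensor squares to zero exactly when the underlying field is null. Once this is in hand the cosmological constant plays no new role, entering only through the trace equation $R = 4\Lambda$, as in Proposition~\ref{P1}.

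For necessity, suppose $F$ is a null 2-form satisfying (\ref{EE}). Splitting the Einstein equations into trace and trace-free parts gives $R = 4\Lambda$ and $S_{ab} = \q\,(F_{ac}F_b{}^c - \frac14 F_{cd}F^{cd}\,g_{ab})$. Since $F$ is null, $F_{cd}F^{cd}=0$, so $S_{ab} = \q\,F_{ac}F_b{}^c$. A null 2-form can be written locally as $F_{ab} = \frac{1}{\sqrt{\q}}(k_a m_b - k_b m_a)$ with $k$ a nonzero null covector, $m$ a unit spacelike covector, and $k_a m^a = 0$; substituting gives $S_{ab} = k_a k_b$, whence $S_a^b S_b^c = (k_b k^b)\,k_a k^c = 0$, and $S_{ab}t^a t^b = (k_a t^a)^2 > 0$ for every timelike $t$, because a nonzero null vector is never orthogonal to a timelike vector. (Alternatively one can bypass the explicit null decomposition by invoking the classical identity
\[
T_a^c T_c^b = \frac14\left[\Bigl(\tfrac12 F_{cd}F^{cd}\Bigr)^2 + \Bigl(\tfrac12\,{}^\star F_{cd}F^{cd}\Bigr)^2\right]\delta_a^b
\]
for the Maxwell stress tensor $T_{ab} = F_{ac}F_b{}^c - \frac14 F_{cd}F^{cd}\,g_{ab}$, which shows that $S_a^b S_b^c$ is a positive multiple of $\delta_a^c$ times the sum of the squares of the two electromagnetic invariants, hence vanishes precisely when $F$ is null.)

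For sufficiency, assume $R = 4\Lambda$, $S_a^b S_b^c = 0$, and $S_{ab}t^a t^b > 0$ for all timelike $t$. Taking the trace of $S_a^b S_b^c = 0$ gives $S_{mn}S^{mn} = 0$, so the algebraic Rainich condition of Proposition~\ref{P1} holds and Proposition~\ref{P1} produces a 2-form $F$ solving (\ref{EE}); it remains only to verify that $F$ is null. Writing $S_{ab} = \q\,T_{ab}$ and using the identity above, $S_a^b S_b^c = 0$ forces the sum of the squares of the two electromagnetic invariants to vanish, so both invariants vanish and $F$ is null; the inequality $S_{ab}t^a t^b > 0$ guarantees $S_{ab}\neq 0$, hence $F\neq 0$. (Purely algebraically, one may instead note that the self-adjoint operator $N = S_a^b$ with $N^2 = 0$ has totally isotropic image $g(Nx,Nx) = g(x,N^2 x) = 0$, hence rank at most one, so $S_{ab} = \pm k_a k_b$ with $k$ null; the positivity condition selects the $+$ sign, and then $F_{ab} = \frac{1}{\sqrt{\q}}(k_a m_b - k_b m_a)$ for any local unit spacelike $m$ orthogonal to $k$ is a null 2-form reproducing $S_{ab}$.)

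The only genuinely non-routine ingredient is the quadratic identity for the Maxwell stress tensor — equivalently, on the sufficiency side, the fact that a Lorentzian self-adjoint nilpotent has a null rank-one image — together with the role of the positivity condition in fixing the overall sign of $S_{ab}$. I anticipate no other obstacle, since the remainder is a direct specialization of Proposition~\ref{P1}.
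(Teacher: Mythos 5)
Your proof is correct and follows essentially the same route as the paper, which simply cites Misner--Wheeler with $S_{ab}$ substituted for $R_{ab}$: the quadratic stress-tensor identity (equivalently, the rank-one nilpotent argument) for necessity, and reduction to Proposition~\ref{P1} plus nullity of the resulting $F$ for sufficiency. You have merely written out explicitly the classical algebraic argument the paper defers to its reference.
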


\begin{proof}

With the substitution of $S_{ab}$ for $R_{ab}$, the proof is identical to that found in \cite{MW}.

\end{proof}

\begin{Corollary}{
Let $(M, g)$ be a 4-dimensional spacetime.  There exists a null electromagnetic 2-form  $F$ such that the Einstein equations (\ref{EE}) are satisfied if and only if $R = 4\Lambda$ and there exists a null geodesic congruence with tangent vector field $k^a$ such that
\begin{equation}
 S_{ab} = \frac{1}{4} k_a k_b,\quad g_{ab} k^a k^b = 0.
\label{K}
\end{equation}
}
\label{C2}
\end{Corollary}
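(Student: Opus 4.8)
The plan is to reduce the algebraic content of both directions to Corollary~\ref{C1} and to extract the geodesic property from the contracted Bianchi identity. In outline: the forward direction splits the Einstein equations into trace and trace-free parts (exactly as in the proof of Proposition~\ref{P1}), identifies the rank-one null structure of $S_{ab}$, and then uses $\nabla^aG_{ab}=0$ with $R$ constant to force the null direction to be geodesic; the converse simply checks the hypotheses of Corollary~\ref{C1}.

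First I would treat the forward implication. Assume there is a null $2$-form $F$ for which (\ref{EE}) holds. Decomposing (\ref{EE}) into trace and trace-free parts gives $R=4\Lambda$ and $S_{ab} = \q\,(F_{ac}F_b{}^c - \tfrac14 F_{cd}F^{cd}g_{ab})$. Nullity of $F$ forces $F_{cd}F^{cd}=0$, so $S_{ab}=\q\,F_{ac}F_b{}^c$. Writing the null $2$-form in canonical form $F_{ab}=k_am_b-m_ak_b$ with $g_{ab}k^ak^b=0$ and $g_{ab}k^am^b=0$, a one-line computation gives $F_{ac}F_b{}^c=(g_{cd}m^cm^d)\,k_ak_b$, so $S_{ab}$ is a nonnegative multiple of $k_ak_b$; since Corollary~\ref{C1} guarantees $S_{ab}t^at^b>0$ the multiple is strictly positive, and absorbing the scalar into the null vector we may arrange $S_{ab}=\tfrac14 k_ak_b$ with $k$ null. (Equivalently, one can argue directly from Corollary~\ref{C1}: $S_a^bS_b^c=0$ together with $S_{ab}t^at^b>0$ pins down $S_{ab}$ as a positive multiple of $k_ak_b$ for a null $k$.)

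Still under the forward hypothesis, I would next show that $k$ generates a null geodesic congruence. In $n=4$ one has $G_{ab}=S_{ab}-\tfrac14 Rg_{ab}$, so the contracted Bianchi identity $\nabla^aG_{ab}=0$, combined with the fact that $R=4\Lambda$ is \emph{constant}, yields $\nabla^aS_{ab}=0$. Hence $0=\nabla^a(k_ak_b)=(\nabla^ak_a)k_b+k^a\nabla_ak_b$, i.e. $k^a\nabla_ak_b=-(\nabla^ck_c)\,k_b$, so the integral curves of $k$ are geodesics up to reparametrization. I expect this to be the crux of the argument; the one subtlety is recalling that a ``geodesic congruence'' permits a non-affine parameter, which is precisely what the divergence term on the right-hand side accounts for.

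Finally, for the converse, suppose $R=4\Lambda$ and that there is a null vector field $k^a$ with $S_{ab}=\tfrac14 k_ak_b$. Then $S_a^cS_c^b=\tfrac1{16}(g_{cd}k^ck^d)\,k_ak^b=0$ because $k$ is null, and for any timelike $t^a$ we have $S_{ab}t^at^b=\tfrac14(k_at^a)^2>0$, since a nonzero null covector is never orthogonal to a timelike vector. With $R=4\Lambda$ this is exactly the hypothesis of Corollary~\ref{C1}, which supplies a null $2$-form $F$ satisfying (\ref{EE}). Note that the geodesic assumption is not needed in this direction; it is an automatic consequence, consistent with the forward implication. Beyond the Bianchi-identity step I do not anticipate any real obstacle.
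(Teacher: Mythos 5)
Your proposal is correct and follows essentially the same route as the paper: reduce the algebraic content to Corollary~\ref{C1} (whose rank-one decomposition $S_{ab}=\frac14 k_ak_b$ is the Misner--Wheeler argument you spell out), and obtain the geodesic property from $\nabla^aS_{ab}=0$, which follows from the contracted Bianchi identity together with $R=4\Lambda$ being constant. The only difference is that you fill in the details the paper delegates to the reference, including the correct observation that the resulting geodesics need not be affinely parametrized.
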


\begin{proof}
This follows from Corollary \ref{C1}. Since $S_a^b S_b^c  = 0$, eq. (\ref{K}) is proved exactly as in \cite{MW}.  That $k^a$ is tangent to geodesics follows from $R=const.$ and the contracted Bianchi identity. 
\end{proof}

From Proposition \ref{P1} and the general form of the electromagnetic 2-form in (\ref{Fgen}), the classical results of Rainich, Misner and Wheeler generalize to the Einstein-Maxwell equations with a cosmological constant as follows.

\begin{Theorem}{
Let $(M, g)$ be a 4-dimensional spacetime.  There exists a non-null 2-form $F_{ab}$ such that $(g, F)$ satisfy the Einstein-Maxwell equations (\ref{EE}), (\ref{ME}) if and only if $g$ satisfies:
 
 \begin{equation}
R =4\Lambda, \quad S_a^b S_b^c = \frac{1}{4} \delta_a^c S_{mn}S^{mn} \neq 0, \quad S_{ab}t^at^b >0,
\label{AlgRainNonNull}
\end{equation}

\begin{equation}
\nabla_{[a}\left(\epsilon_{b]cde}{S^c_m\nabla^dS^{me}\over S_{ij}S^{ij}}\right) = 0.
\label{DiffRain}
\end{equation}
}
\label{Rainich}
\end{Theorem}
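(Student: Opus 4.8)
The plan is to reduce the claim to the classical Rainich result (Proposition \ref{P1}) together with an analysis of when the duality freedom (\ref{Fgen}) can be used to produce a genuine Maxwell field. Proposition \ref{P1} already tells us that the algebraic conditions (\ref{AlgRainNonNull}) are necessary and sufficient for the existence of a 2-form $F$ satisfying the Einstein equations (\ref{EE}), with $F$ determined up to a local duality rotation by an angle $\phi\colon M\to{\bf R}$; the added requirement $S_a^bS_b^c\neq 0$ (equivalently $S_{ij}S^{ij}\neq 0$) singles out the non-null case. So the only thing left is to characterize, in terms of the metric, when one can choose the duality angle $\phi$ so that the resulting $F$ also satisfies the source-free Maxwell equations (\ref{ME}). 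This is exactly the step where the fourth-order differential condition (\ref{DiffRain}) must appear.

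First I would recall the standard non-null Rainich algebra in four dimensions: when $S_{ab}$ satisfies (\ref{AlgRainNonNull}), one builds the ``complexion'' of the field. Concretely, from $S_{ab}=\q(F_{ac}F_b{}^c-\frac14 F_{cd}F^{cd}g_{ab})$ one extracts the Maxwell stress tensor $T_{ab}=\frac1\q S_{ab}$, and the duality-invariant combination $S_{ab}S^{ab}$ fixes the magnitude of $F$ while the ``extremal'' 2-form is determined up to the rotation (\ref{Fgen}). The Maxwell equations $\nabla^bF_{ab}=0=\nabla_{[a}F_{bc]}$ are, for a 2-form of this algebraic type, equivalent to the single statement that a certain 1-form $\alpha_a$ — the gradient of the complexion angle — is built correctly from the geometry: Rainich's computation shows $\alpha_a = \epsilon_{abcd}\,S^b{}_m\nabla^c S^{md}/(S_{ij}S^{ij})$. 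One then shows: $(g,F)$ solves the full Einstein--Maxwell system for some choice of $\phi$ if and only if this $\alpha_a$ is closed, i.e. $\nabla_{[a}\alpha_{b]}=0$, which is precisely (\ref{DiffRain}). The inclusion of $\Lambda$ only changes the trace equation to $R=4\Lambda$, which is already listed, and does not affect the trace-free part $S_{ab}$ nor the differential analysis, so the cosmological constant enters trivially here.

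Concretely the steps are: (i) invoke Proposition \ref{P1} to get, from (\ref{AlgRainNonNull}), a 2-form $F_0$ with $S_{ab}=\q(F_{0ac}F_{0b}{}^c-\frac14 F_{0cd}F_0^{cd}g_{ab})$, unique up to (\ref{Fgen}); (ii) for the necessity direction, assume $(g,F)$ solves (\ref{EE})--(\ref{ME}) with $F$ non-null, write $F=\cos\phi\,F_0-\sin\phi\,{}^\star F_0$, substitute into the Maxwell equations, and show they force $d\phi=\alpha$ with $\alpha$ the 1-form displayed in (\ref{DiffRain}); since $d\phi$ is exact, $d\alpha=0$, giving (\ref{DiffRain}); (iii) for sufficiency, assume (\ref{AlgRainNonNull}) and (\ref{DiffRain}); since $\alpha$ is closed it is locally exact, $\alpha=d\phi$, and the 2-form obtained from $F_0$ by the rotation (\ref{Fgen}) with this $\phi$ then satisfies both halves of (\ref{ME}); the Einstein equations hold by construction and the trace condition $R=4\Lambda$. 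This is essentially the argument of Rainich and of Misner--Wheeler with $R_{ab}$ replaced by $S_{ab}$ and with the trace normalization shifted by $\Lambda$.

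The main obstacle — or rather the only real content beyond bookkeeping — is step (ii)/(iii): verifying that the source-free Maxwell equations for a 2-form of non-null Rainich type are equivalent to the closedness of exactly the 1-form appearing in (\ref{DiffRain}), including getting the coefficient $S^c{}_m\nabla^d S^{me}/(S_{ij}S^{ij})$ and the index placement right. This is a known computation (it is the heart of the classical Rainich theorem), so I would cite \cite{Rainich} and \cite{MW} for the detailed identity rather than reproduce it, and simply note that since $S_{ab}=\q\,T_{ab}$ differs from the Maxwell stress tensor only by the constant $\q$, the complexion 1-form is unchanged, and since the $\Lambda$-term is pure trace it drops out of the trace-free tensor $S_{ab}$ entirely, so the differential condition is literally identical to the $\Lambda=0$ case. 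Hence the generalization is immediate once the classical identification is in hand.
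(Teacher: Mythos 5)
Your proposal is correct and follows essentially the same route as the paper, which simply decomposes the Einstein equations into trace and trace-free parts (so that $\Lambda$ enters only through $R=4\Lambda$) and then invokes the Misner--Wheeler argument verbatim with $R_{ab}$ replaced by $S_{ab}$. Your more explicit account of the complexion 1-form and the integrability condition $d\alpha=0$ for the duality angle $\phi$ is exactly the content being cited.
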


\begin{proof}
After decomposing the Einstein equations into pure trace and trace-free parts, (see (\ref{redEQ})), the proof is the same is in \cite{MW} with $R_{ab}$ replaced by $S_{ab}$.
\end{proof}
     From the proofs of Proposition \ref{Fgen} and Theorem \ref{Rainich}  we have the following formulas for constructing the electromagnetic field from the metric \cite{MW}.
 
\begin{Corollary}
{
Let $(M, g)$ be a 4-dimensional spacetime satisfying the conditions of Theorem \ref{Rainich}.  Then the fields $(g, F)$ satisfy the Einstein-Maxwell equations with
\begin{equation}
F_{ab}= \cos(\phi)\xi_{ab} - \sin(\phi){}^\star\xi_{ab},
\end{equation}
where $\xi_{ab}$ is a solution to
\begin{equation}
\xi_{ab}\xi_{cd}  = \frac{1}{\q} \left(E_{abcd} - [S_{mn}S^{mn}]^{-1/2} E_{abef}E_{cd}{}^{ef}\right),\quad \xi_{ab}{}^\star \xi^{ab}=0,\quad \xi_{ab}\xi^{ab} < 0,
\end{equation}
\begin{equation}
E_{abcd} = \frac{1}{2}\left(g_{ac}S_{bd} - g_{bc}S_{ad} + g_{bd}S_{ac} - g_{ad}S_{bc}\right),
\end{equation}and $\phi$ is a solution to
\begin{equation}
\nabla_b\phi = {\epsilon_{bcde} S^c_m\nabla^dS^{me}\over S_{ij}S^{ij}}.
\end{equation}
}
\label{nonnullcor}
\end{Corollary}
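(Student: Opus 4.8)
The plan is to extract the construction directly from the proof of Theorem \ref{Rainich}, following Misner and Wheeler with $R_{ab}$ replaced throughout by the trace-free Ricci tensor $S_{ab}$, and to observe that the cosmological constant enters only through the trace equation $R = 4\Lambda$. Recall from (\ref{redEQ}) that the Einstein equations (\ref{EE}) are equivalent to the pair $R = 4\Lambda$ and $S_{ab} = \q\,T_{ab}$, where $T_{ab} = F_{ac}F_b{}^c - \frac{1}{4} F_{cd}F^{cd}g_{ab}$; the first of these is the first condition of Theorem \ref{Rainich}, so the entire content of the corollary reduces to exhibiting, from the metric, a non-null closed and co-closed 2-form $F$ whose stress tensor equals $\frac{1}{\q} S_{ab}$.

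First I would construct the \emph{extremal field} $\xi_{ab}$ algebraically at each point. The non-null algebraic conditions (\ref{AlgRainNonNull}) say precisely that $S_{ab}$ has the Segre type $[(1,1)(1,1)]$ of a non-null electromagnetic stress tensor, with a pair of eigenvalues $\sigma$ and a pair $-\sigma$, where $\sigma = \frac12\sqrt{S_{mn}S^{mn}} > 0$, the inequality $S_{ab}t^at^b > 0$ fixing the sign. The classical Rainich algebra then furnishes a 2-form $\xi_{ab}$, unique up to a duality rotation and an overall sign, which is extremal in the sense $\xi_{ab}{}^\star\xi^{ab} = 0$, $\xi_{ab}\xi^{ab} < 0$, and whose stress tensor is $\frac{1}{\q} S_{ab}$. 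The point of the displayed four-index equation is that it packages this ``square root of $S_{ab}$'' covariantly: with $E_{abcd}$ the Kulkarni--Nomizu-type product of $g$ and $S$ defined in the statement, one checks that $E_{abcd} - [S_{mn}S^{mn}]^{-1/2}E_{abef}E_{cd}{}^{ef}$ is antisymmetric in $ab$ and in $cd$, symmetric under $(ab)\leftrightarrow(cd)$, decomposable, and has the correct stress tensor, so that it equals $\q\,\xi_{ab}\xi_{cd}$. This is most transparently verified in an orthonormal frame aligned with the eigenvectors of $S_{ab}$, where both sides can be written out explicitly.

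Next I would promote $\xi_{ab}$ to a genuine Maxwell field. Since the stress tensor is invariant under duality rotations, Proposition \ref{P1} tells us that the 2-forms with stress tensor $\frac{1}{\q} S_{ab}$ are exactly $F_{ab} = \cos\phi\,\xi_{ab} - \sin\phi\,{}^\star\xi_{ab}$ with $\phi\colon M\to{\bf R}$, so the Einstein equation (\ref{EE}) is already satisfied for any choice of $\phi$. Substituting this family into the source-free Maxwell equations (\ref{ME}) and using the standard identities for $\nabla\xi$ and $\nabla{}^\star\xi$ (the divergence of $\xi$ is algebraically determined by $S_{ab}$ and its first derivatives, via the contracted Bianchi identity $\nabla^bS_{ab} = \frac14\nabla_aR = 0$), the system collapses to the single requirement that $\nabla_b\phi$ equal the \emph{complexion gradient} $\alpha_b = \epsilon_{bcde}S^c{}_m\nabla^dS^{me}/(S_{ij}S^{ij})$, a 1-form built purely from $g$. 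Such a $\phi$ exists locally precisely when $\alpha_b$ is closed, and $\nabla_{[a}\alpha_{b]} = 0$ is exactly the differential Rainich condition (\ref{DiffRain}) assumed in Theorem \ref{Rainich}; the Poincar\'e lemma then yields $\phi$, unique up to an additive constant, and the displayed formula for $F_{ab}$ solves both Maxwell equations.

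The main obstacle is the algebraic identity in the second step: verifying that $E_{abcd} - [S_{mn}S^{mn}]^{-1/2}E_{abef}E_{cd}{}^{ef}$ genuinely reproduces $\q\,\xi_{ab}\xi_{cd}$ for the extremal field. This is the classical Rainich ``square-root of the stress tensor'' computation and, although routine, it requires care with index symmetries and with the Segre structure of $S_{ab}$; the cleanest route is the adapted orthonormal frame, in which $S_{ab} = \mathrm{diag}(\sigma,\sigma,-\sigma,-\sigma)$ up to signature and $\xi$ is the obvious rank-two antisymmetric object normalized so that $\xi_{ab}\xi^{ab} = -2$. Everything else --- the trace equation, the duality-rotation family, and the reduction of Maxwell's equations to $\nabla\phi = \alpha$ --- is the Misner--Wheeler argument essentially verbatim, the only genuinely new point being to confirm that replacing $R_{ab}$ by $S_{ab}$ and carrying along $R = 4\Lambda$ introduces no new terms, which it does not because $\alpha_b$ and the extremal-field construction involve only the trace-free part of the Ricci tensor.
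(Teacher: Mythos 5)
Your proposal is correct and follows the same route the paper takes: the paper's proof of this corollary simply defers to the proofs of Proposition \ref{P1} and Theorem \ref{Rainich}, i.e.\ to the Misner--Wheeler construction with $R_{ab}$ replaced by $S_{ab}$, which is precisely the argument you spell out (extremal field from the algebraic square root of $S_{ab}$, duality-rotation family, reduction of Maxwell's equations to $\nabla_b\phi = \alpha_b$ with the differential Rainich condition as its integrability condition, and the cosmological constant entering only through $R = 4\Lambda$). Your added observation that $\nabla^b S_{ab} = \tfrac14\nabla_a R = 0$ is the correct reason the construction survives the replacement unchanged.
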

To give analogous necessary and sufficient conditions for null electrovacua we define  a {\it null tetrad adapted to the vector field $k$} to be a null tetrad $(k, l, m, \overline m)$, where $k$ and $l$ are real and  where $m$ and $\overline m$ are complex conjugates. The null tetrad satisfies
\begin{equation}
g_{ab} k^a l^b = -1,\quad g_{ab} m^a\overline m^b = 1,
\end{equation}
with all other scalar products vanishing. We will use the Newman-Penrose formalism for this tetrad as defined, {\it e.g.,} in \cite{Stewart}. In particular, we recall the definitions of the following Newman-Penrose spin coefficients.  The twist of the null congruence with tangent field $k^a$ is given by
\begin{equation}
\omega =  \Im\left(\overline m^a m^b \nabla_a k_b\right).
\end{equation}
The shear of the null congruence is defined by 
\begin{equation}
\sigma =  \left( m^a m^b \nabla_a k_b\right).
\end{equation}
The acceleration of the congruence is defined by
\begin{equation}
\kappa =  \left( k^b m^a \nabla_b k_a\right).
\end{equation}
The quantities $\omega$, $|\sigma|$, and $|\kappa|$ are intrinsic properties of the null congruence determined by $k^a$ and are independent of the choice of adapted tetrad.  The congruence is surface forming if and only if $\omega = 0$. The congruence consists of geodesics precisely when $\kappa = 0$.  

Finally, we denote by $\db^\alpha = (\db_k, \db_l, \db_m, \db_{\overline m})$ the basis of 1-forms dual to $e_\alpha = (k, l, m, \overline m)$, $\alpha = 1,2,3,4$.  The dual basis satisfies
\begin{equation}
\db^\alpha(e_\beta) = \delta^\alpha_\beta .
\end{equation}

\begin{Theorem}{
Let $(M, g)$ be a 4-dimensional spacetime.  The following three conditions on $g$ are necessary and sufficient  for the existence of a null 2-form $F_{ab}$ such that $(g, F)$ satisfy the Einstein-Maxwell equations (\ref{EE}), (\ref{ME}):

\medskip
\item{\rm (1)} The scalar curvature is constant and the trace-free Ricci tensor is null and positive,
\begin{equation}
R =4 \Lambda, \quad S_a^b S_b^c  = 0\  \Longleftrightarrow\  S_{ab} = \frac{1}{4}k_a k_b, \quad S_{ab}t^at^b >0.
\end{equation}
\smallskip
\item{\rm (2)} The null congruence with tangent field $k^a$ is geodesic and  shear-free.\footnote{Conditions (\ref{sfng}) correspond to the Mariot-Robinson theorem \cite{MR}, which states that the repeated principal null direction of a null electromagnetic field is necessarily tangent to a shear-free geodesic null congruence, and is also a repeated principal null direction of the algebraically special spacetime upon which the null electromagnetic field resides.  This condition is necessary and sufficient (at least in the analytic setting) for the existence of a null solution to the Maxwell equations on a given spacetime, but is only a necessary condition for a  solution to the Einstein-Maxwell equations with null electromagnetic field.}
\begin{equation}
\kappa = 0 = \sigma.
\label{sfng}
\end{equation}
\smallskip
\item{\rm (3)}
If the twist of the null congruence defined by $k$ vanishes, $\omega = 0$, then 
\begin{equation} 
\Re\left[(\overline\delta+\overline\beta-\alpha)(\tau-2\beta)\right] - \half (\mu - \overline\mu)(\epsilon-\overline\epsilon) = 0,
\label{IC4}  
\end{equation}
or, if the twist is non-vanishing, $\omega\neq 0$,
\begin{align}
&{\omega}\delta \Big\{\Re\Big[\delta(\overline\tau-2\overline\beta)\Big]\Big\} - \Big[\delta \omega + \omega(\tau - \overline\alpha - \beta)\Big]\Big[ \Re\left\{(\overline\delta + \overline\beta - \alpha)(\tau - 2\beta)\right\}\nonumber\\
&+ i\Im(\mu)(\rho - 2\epsilon)\Big]
+\frac{\omega}{2}\Big\{\overline\beta\delta(2\overline\alpha + \tau - 4\beta) + \beta\delta(2\alpha + \overline\tau - 4\overline\beta) + 2i\delta(\Im(\mu)(\rho-2\epsilon))\nonumber\\ 
&+ \tau\delta(\overline\beta - \alpha) +  \overline\tau\delta(\beta - \overline\alpha)
- \alpha\delta(\tau - 2\beta) - \overline\alpha\delta(\overline\tau - 2\overline\beta)\Big\}
- i\omega^2\Delta(\tau - 2\beta)\nonumber\\
&+\omega^2\Big[\overline\nu(\omega + 2\Im(\epsilon)) -(\tau-2\beta)(2\Im(\gamma) + i\mu)  +i\overline\lambda (\overline\tau - 2\overline\beta) \Big] = 0,
\label{IC1}
\end{align}
\medskip
Conditions (2) and (3) use the Newman-Penrose quantities for any null tetrad adapted to $k^a$.}
\label{NEV}
\end{Theorem}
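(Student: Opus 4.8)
The plan is to separate the Einstein--Maxwell system into its Einstein part and its Maxwell part, dispose of the Einstein part with Corollary~\ref{C2}, and reduce the Maxwell part to an integrability problem for a single real function --- the phase of the null Maxwell scalar. First, by Corollary~\ref{C2}, the Einstein equations (\ref{EE}) together with the nullity of $F$ are equivalent to condition~(1): $R=4\Lambda$, $S_{ab}=\frac14 k_ak_b$ for a null covector $k$ determined up to scale, and $S_{ab}t^at^b>0$; moreover Corollary~\ref{C2} already records that $R=\mathrm{const}$ plus the contracted Bianchi identity force $k$ to be geodesic, so the only genuinely new content of condition~(2) beyond condition~(1) is that $k$ be shear-free. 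Pushing the same Bianchi argument one step further --- contracting $\nabla^aS_{ab}=0$ (valid since $G_{ab}=S_{ab}-\Lambda g_{ab}$) with $m^a$ and with $\bar m^a$ --- shows that the ``real parts'' of the Maxwell propagation equations appearing below are automatically satisfied once condition~(1) holds. From here I fix a null tetrad $(k,l,m,\bar m)$ adapted to $k$ and compute all Newman--Penrose quantities in it; a preliminary lemma, essentially contained in \cite{Torre2014}, should record that the combinations entering conditions (2) and (3) are invariant under the residual boost, spin, and $k$-fixing null-rotation freedom of such tetrads, so that the statement is well posed.

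Next, any null $2$-form whose repeated principal null direction is $k$ has Maxwell scalars $\phi_0=\phi_1=0$, and the requirement that its stress tensor reproduce $S_{ab}=\frac14 k_ak_b$ fixes $|\phi_2|$ as a metric-determined function, leaving only the phase $\phi_2=|\phi_2|e^{i\theta}$ to be found. Substituting $\phi_0=\phi_1=0$ into the Newman--Penrose form of Maxwell's equations (\ref{ME}) reduces them to $\kappa\phi_2=0$ and $\sigma\phi_2=0$ --- i.e., since $\phi_2\neq 0$, exactly the geodesic and shear-free conditions of~(2) --- together with the two propagation equations $D\phi_2=(\rho-2\epsilon)\phi_2$ and $\delta\phi_2=(\tau-2\beta)\phi_2$, whose structure already foreshadows the combinations $\rho-2\epsilon$ and $\tau-2\beta$ that appear in (\ref{IC4}) and (\ref{IC1}). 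The real parts of these two equations are the identities noted above, while their imaginary parts determine $D\theta$, $\delta\theta$ and $\bar\delta\theta$ in terms of spin coefficients; the transverse derivative $\Delta\theta$ is left free by Maxwell's equations.

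The crux is then the integrability of this overdetermined first-order system for $\theta$. I would form the Newman--Penrose commutators $[D,\delta]\theta$ and $[\delta,\bar\delta]\theta$ (and conjugates), substitute the known $D\theta$, $\delta\theta$, $\bar\delta\theta$, and eliminate the resulting spin-coefficient derivatives using the full set of Newman--Penrose field equations and Bianchi identities --- now carrying the term $\Lambda_{\mathrm{NP}}=R/24=\Lambda/6\neq 0$, which is precisely where the cosmological constant enters and where the computation departs from the $\Lambda=0$ treatment of \cite{Torre2014}; the nontrivial point is that the explicit $\Lambda$-terms must cancel in the end. The commutator $[\delta,\bar\delta]$ carries $\Delta\theta$ with a coefficient proportional to the twist $\omega$, which forces the bifurcation in condition~(3): when $\omega\neq 0$ one can solve $[\delta,\bar\delta]\theta$ for $\Delta\theta$, so $\nabla\theta$ becomes fully determined and complete integrability collapses to the single scalar relation (\ref{IC1}) (note the $\Delta(\tau-2\beta)$ term there, the residue of that elimination); when $\omega=0$ the term drops, $\Delta\theta$ stays free, and the surviving integrability content reduces to the much shorter (\ref{IC4}). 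For sufficiency I would run this in reverse: given (1)--(3), condition~(2) is exactly the Mariot--Robinson hypothesis noted in the footnote to condition~(2), which guarantees the $\theta$-system is consistent, condition~(3) supplies the remaining integrability, so a local phase $\theta$ exists; then $\phi_2=|\phi_2|e^{i\theta}$ assembles into a null $2$-form $F$ satisfying (\ref{EE}) by condition~(1) and (\ref{ME}) by construction.

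I expect the main obstacle to be the explicit reduction in the twisting case: squeezing the integrability condition down to the long expression (\ref{IC1}) is a lengthy, mechanical calculation using essentially the whole Newman--Penrose apparatus, with the extra burden of tracking the $\Lambda_{\mathrm{NP}}$ contributions and confirming they drop out. The two auxiliary points --- tetrad-independence of the stated conditions, and the automatic vanishing of the real parts of the propagation equations --- are routine but must be dealt with for the theorem to be meaningful.
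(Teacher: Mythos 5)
Your proposal follows essentially the same route as the paper's proof (which itself defers the heavy computation to \cite{Torre2014}): dispose of the Einstein part via the algebraic Rainich conditions of Proposition \ref{P1}/Corollary \ref{C2}, reduce the Maxwell equations for the aligned null field to $\kappa=\sigma=0$ plus an overdetermined first-order system for the duality-rotation phase, and obtain condition (3) as the integrability conditions of that system, with the twist $\omega$ controlling whether $\Delta$ of the phase is determined and hence producing the bifurcation between (\ref{IC4}) and (\ref{IC1}). Your phrasing in terms of the phase of $\phi_2$ is equivalent to the paper's duality-rotated $\xi_{ab}$, and your identification of the $\Lambda$-dependence of the Ricci identities as the only place the generalization beyond \cite{Torre2014} must be checked matches the paper's remarks.
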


\begin{proof}

The proof is a straightforward generalization of that given in \cite{Torre2014}.  Here are the salient points.
Setting
\begin{equation}
s^a = \frac{1}{\sqrt{2}}(m^a + \overline m^a),\quad \xi_{ab} = \frac{1}{\sqrt{q}} k_{[a}s_{b]},
\end{equation}
the electromagnetic 2-form $F$ solving the Einstein equations is, according to Proposition \ref{P1},  given by
\begin{equation}
F_{ab} = \cos(\phi)\xi_{ab} - \sin(\phi) {}^\star\xi_{ab},
\end{equation}
where $\phi\colon M\to {\bf R}$ is any function.  The Maxwell equations for $F_{ab}$ are equivalent to (\ref{sfng})  and a system of differential equations for $\phi$ given by
\begin{equation}
d\phi =   i(2\epsilon - \rho) \db_k + i(2\beta - \tau)\db_m - i(2\overline\beta - \overline\tau)\db_{\overline m}.
\label{dphieq}
\end{equation}
Here again we use the Newman-Penrose spin coefficients as defined, {\it e.g.,} in \cite{Stewart}.  The integrability conditions for these equations are computed precisely as in \cite{Torre2014} -- the cosmological constant does not enter into the computation. In the twisting case ($\omega \neq0$) there are six conditions.  Four of these six conditions vanish by virtue of the Ricci identities -- a result which holds precisely as shown in \cite{Torre2014}, even allowing for a non-vanishing cosmological constant in the relevant Ricci identities. This leaves  two non-trivial  conditions, which are shown as one complex condition in (\ref{IC1}). In the twist-free case ($\omega = 0$) there are 3  integrability conditions for (\ref{dphieq}).  Two of these conditions are trivially satisfied as a consequence of the Ricci identities. The cosmological constant does enter into the Ricci identities used in this case.  The remaining, non-trivial condition is (\ref{IC4}). \end{proof}

We remark that condition (\ref{IC4}) is one real condition depending upon as many as four derivatives of the metric.  Condition (\ref{IC1}) is complex; it represents two real conditions and depends upon as many as five derivatives of the metric.  In either case, the conditions for an electrovacuum described in Theorem \ref{NEV} do not depend on the choice of tetrad adapted to $k$ \cite{Torre2014}. 

From the proof of Theorem \ref{NEV}  the null electromagnetic field is constructed from the metric as follows. 

\begin{Corollary}{
Let $(M, g)$ be a 4-dimensional spacetime satisfying the conditions of Theorem \ref{NEV}.  Then the fields $(g, F)$ satisfy the Einstein-Maxwell equations with
\begin{equation}
F_{ab}= \cos(\phi)\xi_{ab} - \sin(\phi){}^\star\xi_{ab},
\end{equation}
where
\begin{equation}
\xi_{ab} = \frac{1}{\sqrt{q}}\,k_{[a}s_{b]},\quad s^a = \frac{1}{\sqrt{2}}(m^a + \overline m^a),
\end{equation}
and $\phi$ is a solution to
\begin{equation}
d\phi =   i(2\epsilon - \rho) \db_k + i(2\beta - \tau)\db_m - i(2\overline\beta - \overline\tau)\db_{\overline m}.
\end{equation}
}
\label{NEVcor}
\end{Corollary}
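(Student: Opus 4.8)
The plan is to read the construction off from the proof of Theorem~\ref{NEV}, assembling its three ingredients in order.

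\emph{Step one: the Einstein equations are solved algebraically.} Under the hypotheses of Theorem~\ref{NEV}, condition~(1) (equivalently Corollary~\ref{C2}) gives $R = 4\Lambda$ and $S_{ab} = \frac{1}{4} k_a k_b$ with $g_{ab}k^ak^b = 0$. Fixing a null tetrad $(k, l, m, \overline m)$ adapted to $k$, set $s^a = \frac{1}{\sqrt 2}(m^a + \overline m^a)$, a real unit spacelike vector with $g_{ab}k^as^b = 0$, and $\xi_{ab} = \frac{1}{\sqrt q}\,k_{[a}s_{b]}$. Using only $k\cdot k = 0$, $k\cdot s = 0$, $s\cdot s = 1$, one verifies the purely algebraic identities $\xi_{cd}\xi^{cd} = 0$, $\xi_{ab}{}^\star\xi^{ab} = 0$, and $\xi_{ac}\xi_b{}^c = \frac{1}{4q}k_ak_b$, whence $q(\xi_{ac}\xi_b{}^c - \frac{1}{4}\xi_{cd}\xi^{cd}g_{ab}) = \frac{1}{4} k_ak_b = S_{ab}$. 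Thus $\xi$ --- and, by Proposition~\ref{P1}, every duality rotate $F_{ab} = \cos(\phi)\xi_{ab} - \sin(\phi){}^\star\xi_{ab}$ of it --- solves the Einstein equations~(\ref{EE}).

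\emph{Step two: the Maxwell equations fix the phase.} Expanding $\nabla^bF_{ab} = 0$ and $\nabla_{[a}F_{bc]} = 0$ for $F = \cos(\phi)\xi - \sin(\phi){}^\star\xi$ in the adapted Newman--Penrose frame, the components split into those not involving $d\phi$ --- which reduce exactly to $\kappa = 0 = \sigma$, i.e.\ (\ref{sfng}), condition~(2) of Theorem~\ref{NEV} --- and the rest, which collapse to the first-order system (\ref{dphieq}), $d\phi = i(2\epsilon - \rho)\db_k + i(2\beta - \tau)\db_m - i(2\overline\beta - \overline\tau)\db_{\overline m}$. Condition~(2) supplies the first set; local solvability of (\ref{dphieq}) is controlled by its integrability condition, namely vanishing of the exterior derivative of the right-hand side, which after simplification via the Ricci identities (these carry a harmless $\Lambda$-term) and the geodesic, shear-free property just imposed, is precisely condition~(3): equation~(\ref{IC4}) when $\omega = 0$, equation~(\ref{IC1}) when $\omega\neq 0$. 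Hence a function $\phi$ exists locally, and for it $F_{ab} = \cos(\phi)\xi_{ab} - \sin(\phi){}^\star\xi_{ab}$ solves the full Einstein--Maxwell system, as claimed.

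\emph{What is left and where to be careful.} The genuine labor --- the Newman--Penrose reduction of the Maxwell equations and the integrability computation --- is exactly what was carried out in proving Theorem~\ref{NEV} (and, modulo the cosmological term, in~\cite{Torre2014}), so the corollary is essentially a repackaging of formulas already derived there; I anticipate no real obstacle. The one point worth a remark is that the construction is well defined: with $k$ pinned (up to sign) by $S_{ab} = \frac{1}{4} k_ak_b$, the residual freedom in an adapted tetrad is a spin $m\mapsto e^{i\theta}m$ together with a null rotation about $k$, under which $\xi_{ab} = \frac{1}{\sqrt q}k_{[a}s_{b]}$ is unchanged by the null rotation and changes by a duality rotation under the spin, and that duality rotation is absorbed into the range of $\phi$ --- exactly the freedom already present in Proposition~\ref{P1}. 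The subtlest verification is that the integrability conditions for (\ref{dphieq}) reduce to precisely (\ref{IC4})/(\ref{IC1}) rather than to a stronger constraint, which rests on the Ricci identities and on $k$ being geodesic and shear-free; that reduction is settled by the proof of Theorem~\ref{NEV}.
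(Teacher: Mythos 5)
Your proposal is correct and follows the same route as the paper: the corollary is stated there without a separate proof precisely because it is read off from the proof of Theorem \ref{NEV}, which introduces $s^a$, $\xi_{ab}$, and the equation for $d\phi$ in exactly the way you describe. Your explicit verification of the algebraic identities for $\xi$ and your remark on tetrad-independence are sound additions but do not change the argument.
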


We remark that, just as in the case with vanishing cosmological constant, $F$ is determined up to a duality rotation in the twisting case, and $F$ involves an arbitrary function of one variable  in the twist-free case.

\subsection{Example: LBR Solution}

There is a class of solutions to the Einstein-Maxwell equations with a spacetime that is the product of two-dimensional constant curvature spaces, which is due to Levi-Civita \cite{LeviCivita}, Bertotti \cite{Bertotti}, and Robinson \cite{Robinson1959}; we call it the LBR solution.   As an illustration of Theorem \ref{Rainich} we give a novel derivation of the LBR solution in its most general form using symmetry methods.  

Consider the set of spherically symmetric spacetimes.  As is well known (see, {\it e.g.} \cite{Stephani}), the set of such spacetimes can be partitioned into three classes according to whether the ``warp factor'' has a spacelike, null, or vanishing gradient. We consider here the latter case in which the spacetime $(M, g)$ is necessarily the product of two-dimensional geometries. We denote this product by $M = L\times S$, where $S$ has constant positive curvature and Riemannian signature (these are the spherical symmetry group orbits), and $L$ has  Lorentzian signature.  Using standard spherical polar coordinates on $S$ and null coordinates on $L$,  the spacetime metric takes the form:
\begin{equation}
g = -e^{h(u,v)} (du \otimes dv + dv \otimes du) + r_0^2(d\theta \otimes d\theta + \sin^2\theta\, d\phi \otimes d\phi).
\label{LCBR}
\end{equation}
Here  $r_0 >0$ is a constant.
We now consider the conditions of Theorem \ref{Rainich}.  We begin by imposing the condition that the scalar curvature is constant, $R =4\Lambda$.  This condition takes the form
\begin{equation}
2 e^{-h} \frac{\partial^2 h}{\partial u\partial v} + \frac{2}{r_0^2} = 4\Lambda.
\label{Lc}
\end{equation}
The first term on the left-hand side represents the scalar curvature of $L$, so this condition forces $L$ to have constant curvature, as might have been expected from the product form of the spacetime geometry.  Thus there are 3 types of  solutions to this geometrization condition according to whether the constant curvature of $L$  is positive, negative, or zero.  
Spacetimes satisfying (\ref{Lc}) are symmetric spaces; the curvature tensor is covariantly constant.  

Remarkably, the condition (\ref{Lc}) is also sufficient for the spacetime to be an electrovacuum.  It is straightforward to verify that the trace-free Ricci tensor of the metric (\ref{LCBR}), given by
\begin{equation}
S =-\frac{1}{2}\left(e^{-h}\frac{\partial^2 h}{\partial u\partial v} - \frac{1}{r_0^2}\right)  \Big[e^h(du \otimes dv + dv \otimes du) + r_0^2(d\theta \otimes d\theta + \sin^2\theta\, d\phi \otimes d\phi)\Big],
\end{equation}
satisfies (\ref{AlgRainNonNull}) and (\ref{DiffRain}) in Theorem \ref{Rainich}.  
As an example, here is a spacetime metric satisfying (\ref{Lc}) where $L$ has constant negative curvature -$4k^2$:
\begin{align}
g_{-} &= -\frac{1}{\cosh^2(k(u+v))} (du \otimes dv + dv \otimes du) + r_0^2(d\theta \otimes d\theta + \sin^2\theta\, d\phi \otimes d\phi), \label{kneg}
\\
\Lambda &= \frac{1}{2r_0^2} - k^2.\nonumber
\end{align}
Here is the electromagnetic field $F_-$, constructed from the metric $g_-$ in (\ref{kneg}) via Corollary \ref{nonnullcor}:
\begin{equation}
\sqrt{q}\,F_- = \sqrt{1 + 2k^2r_0^2} \left(\frac{\cos\alpha}{r_0\cosh^2(k(u+v))} du \wedge dv + r_0\sin\alpha \sin\theta d\theta \wedge d\phi\right),
\end{equation}
where $\alpha\in [0, 2\pi)$.
The fields $(g_-, F_-)$ satisfy the Einstein-Maxwell equations with cosmological constant $\Lambda = \frac{1}{2r_0^2} - k^2$.

\subsection{Example: A stationary null electrovacuum with  cosmological constant}


We use Theorem \ref{NEV} to construct a new hypersurface homogeneous solution to the Einstein-Maxwell equations with a cosmological constant and with a null electromagnetic field.  We begin by defining a class of metrics, parametrized by an arbitrary function $f$, which generalizes a  homogeneous null electrovacuum with cosmological constant due to Siklos \cite{Siklos}.  In coordinates $(u, v, y, z)$ our generalization is
\begin{equation}
g = \frac{3}{|\Lambda|y^2}\left[\frac{1}{2}(du \otimes dv + dv \otimes du) + dy \otimes dy + dz\otimes dz - f(y) dv \otimes dv\right],
\end{equation}
where $\Lambda < 0$.  This metric satisfies $R = 4\Lambda$. The trace-free Ricci tensor takes the pure radiation form
\begin{equation}
S = \left(\frac{1}{2} f^{\prime\prime} - \frac{1}{y} f^\prime\right)  dv \otimes dv,
\end{equation} 
with a twist-free, shear-free null geodesic generator  
\begin{equation}
k = \frac{2}{3}\Lambda y^{3/2}\sqrt{2(yf^{\prime\prime}-2 f^\prime)}\partial_u.
\end{equation}  
From Theorem \ref{NEV}, eq. (\ref{IC4}), the metric defines a  null electrovacuum if and only if $f(y)$ satisfies
\begin{equation}
y^4[f^{\prime\prime\prime2}-f^{\prime\prime}f^{\prime\prime\prime\prime}] + 2y^3[f^\prime f^{\prime\prime\prime\prime} - f^{\prime\prime\prime}f^{\prime\prime}] -2y^2f^{\prime\prime2}+12yf^{\prime\prime}f^\prime-12f^{\prime2} = 0.
\end{equation}
The Siklos solution has $f(y) = y^4$.  A new, non-trivial solution to this equation which leads to a metric satisfying all of the hypothesis of Theorem \ref{NEV} in the twist-free case is given by 
\begin{equation}
f(y) = - \frac{1}{3}y^3 +  e^y( y^2 -2 y+2).
\end{equation}
From Corollary \ref{NEVcor} the null electromagnetic 2-form $F$ which serves as the source for this spacetime is given by
\begin{equation}
\sqrt{q} F = \sqrt{\frac{3}{2|\Lambda|}}e^{y/2}\left[\cos(\frac{z}{2} - \alpha(v)) dv\wedge dy - 
\sin(\frac{z}{2} - \alpha(v)) dv\wedge dz\right],
\end{equation}
where $\alpha(v)$ is an arbitrary function. 

This null electrovacuum has Petrov type N and admits four Killing vector fields, $(\partial_u, \partial_v, \partial_z, 2z\partial_u - v\partial_z)$, generating an isometry group with three-dimensional timelike orbits and null rotation isotropy.\footnote{The Siklos solution, $f(y) = y^4$, yields a spacetime which has five Killing vector fields generating a transitive isometry group.}   The electromagnetic field does not inherit all the spacetime symmetry; for example, $F$ is not translationally invariant in the $z$ coordinate. 

\section*{Acknowledgement}
{The authors gratefully acknowledge fruitful discussions with Ian Anderson and L\"ag Avulin. This work was supported in part by grant OCI-1148331 from the National Science Foundation.}

\end{document}